\let\NAT@parse\undefined
\def \P {{\mathcal P }}
\begin{document}
\selectlanguage{english}
\title{{A New Framework for the Performance Analysis of Wireless Communications under Hoyt (Nakagami-$q$) Fading
}}

\author{Juan M. Romero-Jerez, \emph{Senior Member, IEEE}, and F. Javier Lopez-Martinez, \emph{Member, IEEE}
\thanks{This work has been submitted to the IEEE for possible publication. Copyright may be transferred without notice, after which this version may no longer be accessible.}
\thanks{J. M. Romero-Jerez is with
Departmento de Tecnolog\'{i}a Electr\'{o}nica, E.T.S.I.
Telecomunicaci\'{o}n, Universidad de M\'{a}laga, 29071
M\'{a}laga, Spain (e-mail: romero@dte.uma.es).}
\thanks{F. J. Lopez-Martinez is with Departmento de Ingenier\'{i}a de Comunicaciones, E.T.S.I.
Telecomunicaci\'{o}n, Universidad de M\'{a}laga, 29071
M\'{a}laga, Spain. (e-mail: fjlopezm@ic.uma.es). During the elaboration of this paper, he was with
Wireless Systems Lab, Electrical Engineering,
Stanford University, Stanford, CA.}
}

\maketitle

\begin{abstract}

We present a novel relationship between the distribution of circular and non-circular complex Gaussian random variables. Specifically, we show that the distribution of the squared norm of a non-circular complex Gaussian random variable, usually referred to as squared Hoyt distribution, can be constructed from a conditional exponential distribution. From this fundamental connection we introduce a new approach, the \textit{Hoyt transform} method, that allows to analyze the performance of a wireless link under Hoyt (Nakagami-$q$) fading in a very simple way. We illustrate that many performance metrics for Hoyt fading can be calculated by leveraging well-known results for Rayleigh fading and only performing a finite-range integral. We use this technique to obtain novel results for some information and communication-theoretic metrics in Hoyt fading channels.
%
%
%
\end{abstract}

\begin{keywords}
Hoyt Fading, Rice $Ie$-function, Channel Capacity, Performance Analysis, Outage Probability.
\end{keywords}

\maketitle

\renewcommand{\baselinestretch}{1.8}
\normalsize

\section{Introduction}

The characterization of the distribution of a complex Gaussian random variable (RV) is arguably one of the most relevant problems in engineering and statistics. In the  contexts of information and communication theory, the distribution of the norm of the complex Gaussian random variable $Z=X+jY$ (where $X$ and $Y$ are jointly Gaussian) finds application in many problems such as signal detection, noise characterization, or wireless fading channel modeling, just to name a few.

In the literature, the most general case\footnote{Depending on the specific context, different authors have pursued the characterization of the envelope $E=\sqrt{X^2+Y^2}$ or the squared envelope $R=E^2$. However, both distributions are related by a simple transformation and therefore have a similar form. Throughout this paper, we will focus on the distribution of $R$, which model the instantaneous power of a complex Gaussian signal.} of $X$ and $Y$ having different mean and variance was addressed by Beckmann \cite{Beckmann1967}, as a generalization of the previous results obtained by Rice \cite{Rice1944} and Hoyt \cite{Hoyt1947}, and recently revisited in \cite{Aalo2007,Dharmawansa2009}. In this general situation the chief distribution functions (pdf and cdf) of ${R=X^2+Y^2}$ have complicated forms.

In the specific case of $X$ and $Y$ being independent Gaussian RVs with zero mean and arbitrary variance, or equivalently being correlated Gaussian RVs with zero mean and equal variances, the RV $Z$ is said to be a zero mean non-circular (or improper) Gaussian RV \cite{Picinbono1995}. This occurs in many practical scenarios, such as in the detection of non-stationary complex random signals \cite{Schreier2005}, or in the characterization of multipath fading \cite{Nakagami1960}. In this latter situation, the Hoyt \cite{Hoyt1947} or Nakagami-$q$ fading \cite{Nakagami1960} distribution is used to model short-term variations of radio signals resulting from the addition of scattered waves which can be described as a complex Gaussian RV where the in-phase and quadrature components have zero mean and different variances, or equivalently, where the in-phase and quadrature components are correlated. This distribution is commonly used to model signal fading due to strong ionospheric scintillation in satellite communications \cite{Chytil1967} or in general those fading conditions more severe than Rayleigh, and it includes both Rayleigh fading and one-sided Gaussian fading as special cases. Furthermore, it was shown in \cite{Youssef05} that the second order statistics of Hoyt fading best fit measurement data in mobile satellite channels with heavy shadowing. 

One of the main advantages of Rayleigh fading, which is perhaps the most popular model for the random fluctuations of the signal amplitude when transmitted through a wireless link when there is no direct line-of-sight (LOS) between the transmit and receive ends, relies on its comparatively simple analytic manipulation, as the received signal-to-noise ratio (SNR) is exponentially distributed. Conversely, the received SNR in Hoyt fading has a much more complicated form and sophisticated special functions are required to characterize the pdf or cdf \cite{Paris2009} of a squared Hoyt RV.

Both Rayleigh and Hoyt fading have been extensively investigated in the last few decades \cite{AlouiniBook}; however, while the derivation of information and communication-theoretic performance metrics such as channel capacity \cite{Alouini1999} and outage probability (OP) \cite{Annamalai2001} is usually tractable mathematically for the Rayleigh case, it is way more complicated to analyze the very same scenario when Hoyt fading is assumed.



Dozens of papers have been published in the last years with the aim of analyzing very diverse scenarios where Hoyt fading is considered, for the sake of extending already known results for Rayleigh fading to this more general situation \cite{Cheng2003,Subadar2010,Khatalin2006,Paris2009,Paris2010}. 
However, despite the relationship that can be inferred between both distributions, existing analyses in the literature for Hoyt fading do not exploit this connection and usually require tedious and complicated derivations. To the best of our knowledge, there is no standard procedure that takes advantage of the relationship between both distributions, and therefore the calculations for Hoyt fading must be done from scratch.

In this paper, we present a novel connection between the distribution of the squared norm of a non-circular complex Gaussian RV and its circular counterpart. In other words, we introduce a useful relationship between the squared Hoyt distribution and the exponential distribution, which greatly simplifies the analysis of the former. By exploiting the fact that the cdf of a squared Hoyt distributed random variable is a weighted Rice $Ie$-function, we demonstrate that the squared Hoyt distribution can be constructed from a conditional exponential distribution.

This connection has important relevance in practice: Since most communication-theoretic metrics are computed with a linear operation over the SNR distribution, we show that performance results for Hoyt fading channels can be readily obtained by leveraging previously known results for Rayleigh fading, and computing a very simple finite-range integral. This general procedure will be denoted as the \textit{Hoyt transform} method. The main takeaway is that \textit{there is no need to redo any calculation} in order to analyze the performance of communication systems in Hoyt fading, if there are available results for the simpler Rayleigh case. Instead, the application of the Hoyt transform yields the desired performance result in a direct way.

In some cases, the Hoyt transform has analytical solution and hence the expressions for Hoyt fading are of similar complexity to those obtained for Rayleigh scenarios. Otherwise, the results for Hoyt fading have the form of a finite-range integral with constant integration limits, over the performance metric of interest for the Rayleigh case. Integrals of this form are very usual in communications, including proper-integral forms for the Gaussian $Q$-function \cite{Lopez2014}, the Marcum $Q$-function \cite{Simon1998} or the Pawula $F$-function \cite{Pawula1982}, or those obtained with the application of the moment generating function (MGF) approach to the calculation of error probability \cite{Simon1998b}. Therefore, the numerical computation of the Hoyt transform introduced in this paper is simpler than other alternatives that require the evaluation of infinite series or inverse Laplace transforms. As an additional advantage, our new approach also permits to obtain upper and lower bounds of different performance metrics in a simple way. 

Using this general procedure, we provide novel analytical results for three scenarios of interest in information and communication theory: 
\begin{itemize}
\item{We analyze the Shannon capacity of adaptive transmission techniques in Hoyt fading channels, thus extending the results given in \cite{Alouini1999}. Thanks to the Hoyt transform method, we can calculate the asymptotic capacity in the low and high-SNR regimes in closed-form. We show that the asymptotic capacity loss per bandwidth unit in the high-SNR regime is up to 1.83 bps/Hz compared to the AWGN case, and up to 1 bps/Hz when compared to the Rayleigh case.}
\item{We investigate the physical layer security of a wireless link in the presence of an eavesdropper, where both the desired and wiretap links are affected by Hoyt fading. Known analytical results are available for different scenarios such as Rayleigh \cite{Barros2006}, Nakagami-$m$ \cite{Sarkar2009}, Rician \cite{Liu2013}, or Two-Wave with Diffuse Power \cite{Wang2014} fading models. However, to the best of our knowledge, there are no results in the literature for the physical layer security in Hoyt fading channels.}
\item{We evaluate evaluate the OP of a Hoyt-faded wireless link affected by arbitrarily distributed co-channel interference and background noise. Specifically, we show that the OP in this general scenario will be given in terms of a Hoyt transform of the MGF of the aggregate interference, admitting a very simple evaluation even for very general fading distributions such as the $\eta$-$\mu$ or $\kappa$-$\mu$ models \cite{Yacoub2007}.}
\end{itemize}
 
The rest of this paper is organized as follows. In Section \ref{secDef}, some preliminary definitions are introduced and the Rice $Ie$-function is reviewed. Then, the main mathematical contributions are presented in Section \ref{secMain}: the connection between the squared Hoyt and the exponential distributions, and its application to define the Hoyt transform method to the performance analysis in Hoyt fading channels. This approach is used in Sections \ref{secCap} to \ref{secOP} to obtain analytical results in the aforementioned scenarios. Numerical results are presented in Section \ref{secNum}, whereas the main conclusions are outlined in Section \ref{secConc}.


\section{Definitions and preliminary results}
\label{secDef}
\subsection{Non-circular complex Gaussian RV}

Let $Z=X+jY$ be a zero-mean non-circular complex Gaussian RV, where $X$ and $Y$ are independent jointly Gaussian RVs with variances $\sigma_x^2$ and $\sigma_y^2$. Then, the random variable ${R=X^2+Y^2}$ is said to follow the squared Hoyt distribution, and its pdf is given by
\begin{equation} \label{eq:01a}
f_R(x)= \frac{1+q^2}{2q \overline{\gamma}} \exp \left[-\frac{(1+q^2)^2 x}{4q^2 \overline{\gamma}}\right]  I_0 \left(\frac{(1-q^4) x}{4q^2 \overline{\gamma}}\right),
\end{equation}
where $I_0(\cdot)$ is the modified Bessel function of the first kind and zero order, $\overline\gamma=\mathbb{E}\{R\}$ and $q=\sigma_y/\sigma_x$, assuming without loss of generality that $\sigma_y\leq\sigma_x$. Therefore, we have $q\in[0,1]$.

If $q=1$, then $Z$ is a zero-mean circular complex Gaussian RV and therefore $R$ is exponentially distributed, with pdf
$
	 f_R(x)= \frac{1}{\overline{\gamma}}
	 e^{-x/\overline{\gamma}}.
$

\subsection{The Rice $Ie$-function}

Let $k$ and $x$ be non-negative real numbers with $0 \leq k  \leq 1$, the Rice $Ie$-function is defined as \cite{Rice1944}
\begin{equation} \label{eq:01}
	Ie(k,x)=\int_0^x e^{-t}I_0(kt) dt.
\end{equation}

The Rice $Ie$-function admits different infinite series representations \cite{Tan1984,Sofotasios2011}, and it is not considered a tabulated function, in the sense that it is not included as a built-in function in standard mathematical software packages such as Matlab or Mathematica. However, after the appropriate change of notation this function can be written in compact form, as \cite{Pawula1995}
\begin{equation} \label{eq:02}
	Ie(k,x)=\frac{1}{\sqrt{1-k^2}} \left[Q(\sqrt{ax},\sqrt{bx})-Q(\sqrt{bx},\sqrt{ax})\right]
\end{equation}
or equivalently,
\begin{equation} \label{eq:03}
	Ie(k,x)=\frac{1}{\sqrt{1-k^2}} \left[2Q(\sqrt{ax},\sqrt{bx})-e^{-x}I_0(kx)-1\right],
\end{equation}
where $a=1+\sqrt{1-k^2}$, $b=1-\sqrt{1-k^2}$ and
\begin{equation} \label{eq:04}
	Q(\alpha,\beta)=\int_\beta^\infty  t e^{-\frac{t^2+\alpha^2}{2}}I_0(\alpha t) dt
\end{equation}
is the first order Marcum $Q$-function. 

Since both the modified Bessel function $I_0$ and the Marcum $Q$-function are tabulated functions, (\ref{eq:02}) and (\ref{eq:03}) can be easily computed. However, subsequent manipulations of these expressions are generally complicated and in many situations it may be preferable to express the Rice $Ie$-function in integral form. Replacing $I_0(\cdot)$ in (\ref{eq:01}) by its integral representation, namely \cite[eq. (8.431.5)]{Gradstein2007}
\begin{equation} \label{eq:05}
	I_0(z)=\frac{1}{\pi} \int_0^\pi e^{z\cos\theta} d\theta,
\end{equation}
after some manipulation we can write \cite{Tan1984}
\begin{equation} \label{eq:06}
	Ie(k,x)=\frac{1}{\sqrt{1-k^2}}- \frac{1}{\pi} \int_0^\pi \frac{e^{-x(1-k\cos\theta)}}{1-k\cos\theta} d\theta,
\end{equation}
which has important advantages with respect to (\ref{eq:01}), as the integration limits do not depend on the arguments of the defined function, and the integrand is given in terms of elementary functions. However, for reasons that will become clear in the next Section, a much more convenient representation of the Rice $Ie$-function for the purpose of this work is the one provided in the following proposition.

\newtheorem {proposition}{Proposition}
\newtheorem {lemma}{Lemma}
\newtheorem {corollary}{Corollary}

\begin{proposition} 
The Rice $Ie$-function can be written in integral form as
\begin{equation} \label{eq:07}
	Ie(k,x)=\frac{1}{\sqrt{1-k^2}}\left[1 - \frac{1}{\pi} \int_0^\pi \exp \left(-x \frac{{1-k^2}}{1-k\cos\theta}\right) d\theta\right].
\end{equation}
\end{proposition}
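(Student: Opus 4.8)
The plan is to obtain (\ref{eq:07}) directly from the integral representation (\ref{eq:06}), which has already been established, by means of a single change of the integration variable. The key observation is that the exponent appearing in (\ref{eq:07}), namely $(1-k^2)/(1-k\cos\theta)$, and the quantity $1-k\cos\theta$ appearing in (\ref{eq:06}) are linked by the involution $u\mapsto(1-k^2)/u$: as $\theta$ runs over $[0,\pi]$ the factor $v=1-k\cos\theta$ increases from $1-k$ to $1+k$, while $(1-k^2)/v$ decreases from $1+k$ to $1-k$, so both expressions sweep the same interval $[1-k,1+k]$ in opposite directions. This suggests introducing a new angle $\phi\in[0,\pi]$ defined implicitly by $1-k\cos\phi=(1-k^2)/(1-k\cos\theta)$, which is a smooth, orientation-reversing bijection of $[0,\pi]$ onto itself with $\theta=0\leftrightarrow\phi=\pi$ and $\theta=\pi\leftrightarrow\phi=0$.

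Next I would compute the differential relation. Writing $v=1-k\cos\theta$ and $v'=1-k\cos\phi=(1-k^2)/v$, the identities $\cos\theta=(1-v)/k$ and $\cos\phi=(1-v')/k$ give $d\theta=dv/\sqrt{k^2-(1-v)^2}$ and an analogous expression for $d\phi$. The heart of the computation is the algebraic simplification
\begin{equation} \label{eq:keyJac}
k^2-(1-v')^2=\frac{(1-k^2)\left[k^2-(1-v)^2\right]}{v^2},
\end{equation}
which follows by expanding $\bigl(1-(1-k^2)/v\bigr)^2$ and collecting terms, and which is precisely what makes the square roots in the Jacobian combine cleanly. Substituting $v'=(1-k^2)/v$ and invoking (\ref{eq:keyJac}) then yields, after simplification, the compact relation $d\theta=-\tfrac{\sqrt{1-k^2}}{1-k\cos\phi}\,d\phi$, together with the endpoint correspondence noted above.

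Finally I would substitute this change of variable into (\ref{eq:06}). Under the map the exponential becomes $e^{-x(1-k\cos\theta)}=\exp\!\left(-x\,\frac{1-k^2}{1-k\cos\phi}\right)$, while the algebraic prefactor combines with the Jacobian as $\frac{1}{1-k\cos\theta}\,d\theta=\frac{1-k\cos\phi}{1-k^2}\left(-\frac{\sqrt{1-k^2}}{1-k\cos\phi}\right)d\phi=-\frac{1}{\sqrt{1-k^2}}\,d\phi$, so that all $\phi$-dependence outside the exponential cancels. Reversing the integration limits to absorb the minus sign turns the integral term of (\ref{eq:06}) into $\frac{1}{\sqrt{1-k^2}}\frac{1}{\pi}\int_0^\pi\exp\!\left(-x\frac{1-k^2}{1-k\cos\theta}\right)d\theta$, and factoring out $1/\sqrt{1-k^2}$ gives exactly (\ref{eq:07}). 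The only delicate point is the Jacobian identity (\ref{eq:keyJac}); once it is in hand the remainder is bookkeeping, and the argument holds for $0\le k<1$, with the boundary case $k=1$ recovered by continuity.
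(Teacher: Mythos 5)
Your proof is correct, but it takes a genuinely different route from the paper's. The paper does not derive the representation from first principles: its proof consists of quoting Pawula's identity (\ref{eq:09}), namely $W\int_0^\pi \frac{e^{-(U-V\cos\theta)}}{U-V\cos\theta}\,d\theta=\int_0^\pi \exp\bigl(-\tfrac{W^2}{U-V\cos\theta}\bigr)d\theta$ with $W=\sqrt{U^2-V^2}$, setting $U=x$, $V=kx$ (so $W=x\sqrt{1-k^2}$), and combining the result with (\ref{eq:06}). You start from the same representation (\ref{eq:06}) but prove the needed identity from scratch via the involutive substitution $1-k\cos\phi=(1-k^2)/(1-k\cos\theta)$; in effect you have supplied a self-contained proof of the very identity the paper outsources to \cite{Pawula1995,Pawula1998}. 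Your computations check out: writing $v=1-k\cos\theta$ and $a=v-1$, one has $k^2v^2-(a+k^2)^2=(1-k^2)(k^2-a^2)$, which is exactly your Jacobian identity; from it, $d\theta=-\frac{\sqrt{1-k^2}}{1-k\cos\phi}\,d\phi$, the prefactor $\frac{1}{1-k\cos\theta}=\frac{1-k\cos\phi}{1-k^2}$ cancels against this Jacobian to leave $-\frac{1}{\sqrt{1-k^2}}\,d\phi$, and reversing the flipped limits yields (\ref{eq:07}). What your route buys is self-containedness (the proposition no longer rests on re-indexing identities from two external papers) at the cost of a page of algebra; what the paper's route buys is brevity and an explicit pointer to the literature. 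Two minor remarks: at $k=0$ your substitution degenerates (the defining relation reads $1=1$), but (\ref{eq:07}) is immediate there since both integrands are the constant $e^{-x}$; and your continuity remark correctly disposes of $k=1$, where the right-hand side of (\ref{eq:07}) must be read as a limit.
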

\begin{proof}
Although not specifically stated in this form, this result follows from the identities provided by Pawula in \cite[eqs. (2a)-(2d)]{Pawula1995} \cite[eqs. (7)-(10)]{Pawula1998}. In particular, from \cite[eqs. (9)-(10)]{Pawula1998} the following identity holds:
\begin{equation} \label{eq:09}
	 W\int_0^\pi \frac{e^{-(U-V\cos\theta)}}{U-V\cos\theta} d\theta =
	 \int_0^\pi \exp \left(- \frac{{W^2}}{U-V\cos\theta}\right) d\theta,
\end{equation}
where $W=\sqrt{U^2-V^2}$. By identifying $U=x$ and $V/U=k$, and with the help of  (\ref{eq:06}), the desired expression is obtained.
\end{proof}

\section{Main Results}
\label{secMain}
We now introduce the main mathematical contributions of this work, which are given in a set of lemmas and corollaries.


\begin{lemma} \label{teorema 1}
Let $R|\theta$ be an exponentially distributed random variable, conditioned on $\theta$, with pdf
\begin{equation} \label{eq:11_2}
	 f_{R|\theta}(x)= 	 \frac{1}{\gamma(\theta,q)} e^{-x/\gamma(\theta,q)},
\end{equation}
where $\theta$ is a random variable uniformly distributed between 0 and $\pi$, and
\begin{equation} \label{eq:11_1}
	\gamma(\theta,q) \triangleq \overline{\gamma}\left(1-\frac{1-q^2}{1+q^2}\cos\theta\right)=\mathbb{E}\left\{R|\theta\right\}.
\end{equation}
Then, the unconditional random variable $R$, with pdf  
\begin{equation} \label{eq:11_3}
	 f_R(x)= \frac{1}{\pi} \int_0^\pi 
	 \frac{1}{\gamma(\theta,q)}
	  e^{-x/\gamma(\theta,q)}
	 d\theta,
\end{equation}
follows a squared Hoyt distribution with average $\mathbb{E}\left\{R\right\}=\overline{\gamma}$ and parameter $q$, i.e., (\ref{eq:11_3}) is an alternative expression for the pdf given in (\ref{eq:01a}).
The cdf of $R$ will be given by
\begin{equation} \label{eq:14}
	F_R(x)=1 - \frac{1}{\pi} \int_0^\pi e^{-x/\gamma(\theta,q)} d\theta.
\end{equation}
\end{lemma}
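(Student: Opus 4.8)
The plan is to verify the cdf first, since it is by far the simpler object, and then to recover the density by differentiation. Because $R\mid\theta$ is exponential, its conditional cdf is $1-e^{-x/\gamma(\theta,q)}$, and the unconditional cdf follows from the total probability theorem by averaging over the uniform variable $\theta\in[0,\pi]$:
\begin{equation*}
F_R(x)=\frac{1}{\pi}\int_0^\pi\left(1-e^{-x/\gamma(\theta,q)}\right)d\theta=1-\frac{1}{\pi}\int_0^\pi e^{-x/\gamma(\theta,q)}d\theta,
\end{equation*}
which is precisely (\ref{eq:14}). The entire substance of the lemma is thus to show that this $F_R$ is the cdf of the squared Hoyt law whose density is (\ref{eq:01a}), and for this the decisive tool is Proposition~1.

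The key step is to match the finite-range integral above to the integrand of (\ref{eq:07}). Writing $\gamma(\theta,q)=\overline{\gamma}\bigl(1-\tfrac{1-q^2}{1+q^2}\cos\theta\bigr)$, I would set $k=\tfrac{1-q^2}{1+q^2}$, so that $\sqrt{1-k^2}=\tfrac{2q}{1+q^2}$, and rescale the argument as $x\mapsto x'=\tfrac{(1+q^2)^2}{4q^2\overline{\gamma}}\,x$. Since $1-k^2=\tfrac{4q^2}{(1+q^2)^2}$, these identifications make the exponent $x/\gamma(\theta,q)$ equal to $x'\tfrac{1-k^2}{1-k\cos\theta}$, exactly the form appearing in (\ref{eq:07}). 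Proposition~1 then converts the integral into a weighted Rice $Ie$-function, giving
\begin{equation*}
F_R(x)=\frac{2q}{1+q^2}\,Ie\!\left(\frac{1-q^2}{1+q^2},\,\frac{(1+q^2)^2x}{4q^2\overline{\gamma}}\right),
\end{equation*}
which is the weighted $Ie$-function already known to be the cdf of the squared Hoyt distribution; this corroborates the result and closes the cdf part.

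Finally, the density (\ref{eq:11_3}) is obtained by differentiating (\ref{eq:14}) under the integral sign, reproducing the $\theta$-average of the conditional exponential pdf as required by consistency. The fully self-contained confirmation that this equals (\ref{eq:01a}) comes from differentiating the weighted $Ie$-expression using $\tfrac{d}{dx}Ie(k,x)=e^{-x}I_0(kx)$, which is immediate from the definition (\ref{eq:01}), together with the chain-rule factor $\tfrac{(1+q^2)^2}{4q^2\overline{\gamma}}$: the prefactors collapse to $\tfrac{1+q^2}{2q\overline{\gamma}}$, while the exponential rate and the Bessel argument reduce to $\tfrac{(1+q^2)^2}{4q^2\overline{\gamma}}$ and $\tfrac{1-q^4}{4q^2\overline{\gamma}}$, matching (\ref{eq:01a}) term by term. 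I expect the only real obstacle to be bookkeeping, namely carrying $k$ and the rescaling through so that $\sqrt{1-k^2}$, the exponential rate, and the Bessel argument all land on the exact coefficients of (\ref{eq:01a}); no genuine analytical difficulty arises, as Proposition~1 and the definition of the $Ie$-function supply all the needed machinery.
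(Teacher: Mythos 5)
Your proposal is correct and is essentially the paper's own proof run in reverse: both arguments hinge on Proposition~1 and on identifying the Hoyt cdf with the weighted Rice $Ie$-function in (\ref{eq:13}). The paper starts from the Hoyt pdf (\ref{eq:01a}), integrates to reach (\ref{eq:13}), and applies Proposition~1 to arrive at (\ref{eq:14}), then differentiates to get (\ref{eq:11_3}); you start from the exponential mixture, reach (\ref{eq:14}) by total probability, and invert those same two steps (Proposition~1, then differentiation of the weighted $Ie$-form via $\tfrac{d}{dx}Ie(k,x)=e^{-x}I_0(kx)$) to recover (\ref{eq:01a}) --- the same chain of identities traversed in the opposite direction.
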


\begin{proof}
The cdf of the $R$ can be calculated as
\begin{equation} \label{eq:12}
	 F_R(x)= \int_0^x \frac{1+q^2}{2q \overline{\gamma}}
	 \exp \left[-\frac{(1+q^2)^2 t}{4q^2 \overline{\gamma}}\right] 
	 I_0 \left(\frac{(1-q^4) t}{4q^2 \overline{\gamma}}\right) dt,
\end{equation}
which can be written using the definition of the Rice $Ie$-function in (\ref{eq:01}) as
\begin{equation} \label{eq:13}
	 F_R(x)= \frac{2q}{1+q^2}
	  Ie \left( \frac{1-q^2}{1+q^2} ,
	  \frac{(1+q^2)^2}{4q^2 \overline{\gamma}} x\right).
\end{equation}

Using the alternative definition for the Rice $Ie$-function in (\ref{eq:07}), the cdf of the SNR can be written after some algebraic manipulation as in (\ref{eq:14}). Finally, by taking the derivative of (\ref{eq:14}), the desired pdf in (\ref{eq:11_3}) is obtained.
\end{proof}

By comparing (\ref{eq:11_3}) with the pdf of an exponential distribution, Lemma \ref{teorema 1} states that a squared Hoyt RV can be viewed as a finite-range integral of a exponentially distributed RV with continuously varying averages. Note that the factor that multiplies $\cos\theta$ in (\ref{eq:11_1}) coincides with the squared third eccentricity $\epsilon$ of the ellipse represented by the underlying non-circular complex Gaussian random variable of the Hoyt distribution \cite{Coluccia2013}, i.e. $\epsilon=\frac{1-q^2}{1+q^2}$.

A direct application of Lemma \ref{teorema 1} in a communication-theoretic context follows: Any performance metric in Hoyt fading channels that can be obtained by averaging over the SNR pdf (e.g. outage probability, channel capacity, error probability) can be calculated from existing results for Rayleigh fading, by performing a finite-range integral. In this situation, $\bar\gamma$ represents the average SNR and $R$ (or $\gamma$, indistinctly) denotes the instantaneous SNR. Since most performance metrics of interest for Rayleigh fading in the literature are usually given in closed-form, the proposed approach allows for easily extending the results to Hoyt fading in a very simple manner. This is formally stated in the following lemma, where the Hoyt transform is introduced.

\begin{lemma} \label{l2}
Let $h(\gamma)$ be a performance metric depending on the instantaneous SNR $R$, and let $\overline{h}_R(\overline{\gamma})$ be the metric in Rayleigh fading with average SNR $\overline{\gamma}$ obtained by averaging over an interval of the pdf of the SNR, i.e.,
\begin{equation} \label{eq:15}
	 \overline{h}_R(\overline{\gamma})= \int_a^b 
	 h(x) \frac{1}{\overline{\gamma}}
	  e^{-x/\overline{\gamma}}
	 dx,
\end{equation}
with $0 \leq a < b \leq \infty$.
Then, the performance metric in Hoyt fading channels with average SNR $\overline{\gamma}$, denoted as $\overline{h}_H(\overline{\gamma})$, can be calculated as
\begin{equation} \label{eq:16}
	 \overline{h}_H(\overline{\gamma})= \frac{1}{\pi} \int_0^\pi 
	\overline{h}_R(\gamma(\theta,q))d\theta=\mathcal{H}\left\{\overline{h}_R(\overline\gamma);q\right\}.
\end{equation}
where $\mathcal{H}\left\{\cdot;q\right\}$ is the Hoyt transform operation.
\end{lemma}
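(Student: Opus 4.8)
The plan is to prove the Hoyt transform identity (\ref{eq:16}) directly from the mixture representation of the squared Hoyt pdf established in Lemma \ref{teorema 1}, without invoking any special-function machinery. First I would write the Hoyt-faded metric as the average of $h$ over the restricted interval $[a,b]$ against the squared Hoyt density, namely $\overline{h}_H(\overline{\gamma})=\int_a^b h(x)\, f_R(x)\,dx$, and then substitute the integral form of $f_R(x)$ given in (\ref{eq:11_3}). This recasts $\overline{h}_H(\overline{\gamma})$ as a double integral over $x\in[a,b]$ and $\theta\in[0,\pi]$ of the joint integrand $\tfrac{1}{\pi}\,h(x)\,\tfrac{1}{\gamma(\theta,q)}\,e^{-x/\gamma(\theta,q)}$.

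The key step is to interchange the order of integration so as to integrate over $x$ first and over $\theta$ last. After the swap, the inner $x$-integral is $\int_a^b h(x)\,\tfrac{1}{\gamma(\theta,q)}\,e^{-x/\gamma(\theta,q)}\,dx$, which is precisely the Rayleigh metric (\ref{eq:15}) with the average SNR $\overline{\gamma}$ replaced by $\gamma(\theta,q)$, i.e. $\overline{h}_R(\gamma(\theta,q))$. Pulling the remaining factor $\tfrac{1}{\pi}$ and $d\theta$ back outside then yields $\overline{h}_H(\overline{\gamma})=\tfrac{1}{\pi}\int_0^\pi \overline{h}_R(\gamma(\theta,q))\,d\theta$, which is the asserted expression and, by definition, the Hoyt transform $\mathcal{H}\{\cdot;q\}$ of the Rayleigh result.

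The main obstacle is rigorously justifying the interchange of the two integrals. When $h$ is nonnegative on $[a,b]$ — the case for most metrics of interest such as outage probability or ergodic capacity — Tonelli's theorem applies and the exchange is immediate. For a general signed or complex $h$ I would instead invoke Fubini's theorem, which requires absolute integrability of the joint integrand. Here the kernel is uniformly controlled: from (\ref{eq:11_1}) one checks that $\gamma(\theta,q)\in[\,\overline{\gamma}\tfrac{2q^2}{1+q^2},\,\overline{\gamma}\tfrac{2}{1+q^2}\,]$ for $q\in(0,1]$ and $\theta\in[0,\pi]$, so $\gamma(\theta,q)$ is bounded away from $0$ and $\infty$ and the exponential weight is dominated uniformly in $\theta$ by a fixed Rayleigh density on $[a,b]$. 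Consequently, absolute integrability of $h$ against the Rayleigh density over $[a,b]$ suffices to secure the swap, and I would record this as the (mild) standing hypothesis under which the identity holds. The entire argument then reduces to a single application of Fubini/Tonelli on top of Lemma \ref{teorema 1}.
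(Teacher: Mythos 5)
Your proposal is correct and follows essentially the same route as the paper: substitute the mixture representation (\ref{eq:11_3}) of the squared Hoyt pdf into $\overline{h}_H(\overline{\gamma})=\int_a^b h(x)f_R(x)\,dx$, interchange the order of integration, and recognize the inner $x$-integral as $\overline{h}_R(\gamma(\theta,q))$. The only difference is that you justify the interchange more carefully (Tonelli for nonnegative $h$, Fubini with the bound $\gamma(\theta,q)\in[\tfrac{2q^2}{1+q^2}\overline{\gamma},\tfrac{2}{1+q^2}\overline{\gamma}]$ for signed $h$), whereas the paper relegates this to a footnote assuming $h$ nonnegative and continuous; note only that for $b=\infty$ the dominating density must be the exponential with the largest conditional mean $\tfrac{2}{1+q^2}\overline{\gamma}$, not the one with mean $\overline{\gamma}$.
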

\begin{proof}
The metric $\overline{h}_H(\overline{\gamma})$ is obtained as
\begin{equation} \label{eq:16_1}
	 \overline{h}_H(\overline{\gamma})= \int_a^b 
	 h(x) f(x)
	 dx.
\end{equation}
where $f(x)$ is the pdf of a squared Hoyt random variable given in (\ref{eq:11_3}).
Thus, we can write
\begin{equation} \label{eq:17}
	 \overline{h}_H(\overline{\gamma})= \int_a^b 
	 h(x) \frac{1}{\pi} \int_0^\pi 
	 \frac{1}{\gamma(\theta,q)}
	  e^{-x/\gamma(\theta,q)}
	 d\theta
	 dx,
\end{equation}
and reversing the order of integration\footnote{A sufficient condition for the double integral to be reversible is that $h(x)$ is a nonnegative continuous function, which is the case of most performance metrics of interest, such as channel capacity, symbol error rate, outage probability, etc.
}
yields
\begin{equation} \label{eq:18}
	 \overline{h}_H(\overline{\gamma})= \frac{1}{\pi} \int_0^\pi \left[\int_a^b 
	 h(x)  
	 \frac{1}{\gamma(\theta,q)}
	  e^{-x/\gamma(\theta,q)}
	  dx\right]
    d\theta.
\end{equation}
By recognizing that the integral between brackets is actually $\overline{h}_R(\gamma(\theta,q))$, (\ref{eq:16}) is finally obtained.  
\end{proof}

We see that Lemma \ref{l2} provides a very simple and direct way to analyze the performance of communication systems operating in Hoyt fading channels. In fact, some interesting dualities with the popular MGF approach to the error-rate performance analysis of digital communication systems over fading channels \cite{Simon1998b} can be inferred: 

In the reference work by Simon and Alouini, error-rate expressions are obtained ``\textit{in the form of a single integral with finite limits and an integrand composed of elementary functions, thus readily enabling numerical evaluation}''; in our work, the Hoyt transform also facilitates the derivation of expressions of the same form, with the integrand being now directly the performance metric obtained in the Rayleigh case. However, while the MGF approach is applicable to obtain a specific performance metric (error-rate) in general fading channels; the Hoyt transform approach is applicable to obtain general performance metrics in a specific fading channel (Hoyt).

An interesting consequence of Lemma \ref{l2} is the following corollary.
\begin{corollary}
The MGF of a squared Hoyt random variable of average $\overline{\gamma}$ and shape parameter $q$ can be written as
\begin{equation} \label{eq:00}
	\phi(s)=   
	\frac{1}{\pi} \int_0^\pi 
	\frac {1} {1-\gamma(\theta,q)s} 
	d\theta.
\end{equation}
\end{corollary}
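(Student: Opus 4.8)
The plan is to recognize the MGF as a particular instance of the general averaging operation established in Lemma \ref{l2}, so that the result follows by a direct application of the Hoyt transform rather than by any independent computation. Recall that the MGF of the instantaneous SNR $R$ is, by definition, $\phi(s)=\mathbb{E}\{e^{sR}\}$. This is exactly an expectation of the type appearing in (\ref{eq:16_1}), obtained by specializing the performance metric to $h(\gamma)=e^{s\gamma}$ and integrating over the entire support of the SNR, i.e. taking $a=0$ and $b=\infty$ in the setup of Lemma \ref{l2}.

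First I would evaluate the corresponding Rayleigh-fading metric $\overline{h}_R(\overline{\gamma})$ by substituting $h(x)=e^{sx}$ into (\ref{eq:15}). This produces the elementary integral
\[
	\overline{h}_R(\overline{\gamma})=\int_0^\infty e^{sx}\,\frac{1}{\overline{\gamma}}\,e^{-x/\overline{\gamma}}\,dx=\frac{1}{1-\overline{\gamma}s},
\]
which is the familiar MGF of an exponential RV of mean $\overline{\gamma}$. I would then invoke Lemma \ref{l2}, which replaces the fixed average $\overline{\gamma}$ by the $\theta$-dependent average $\gamma(\theta,q)$ and averages over $\theta$; inserting the Rayleigh MGF above into (\ref{eq:16}) gives
\[
	\phi(s)=\overline{h}_H(\overline{\gamma})=\frac{1}{\pi}\int_0^\pi\frac{1}{1-\gamma(\theta,q)s}\,d\theta,
\]
which is precisely the claimed expression (\ref{eq:00}). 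Thus the derivation is essentially a one-line consequence of the Hoyt transform.

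The main point requiring care is the region of convergence and the associated justification for reversing the order of integration in Lemma \ref{l2}. Since $q\in[0,1]$, the coefficient $\tfrac{1-q^2}{1+q^2}$ is nonnegative, so by (\ref{eq:11_1}) the conditional average $\gamma(\theta,q)$ is largest at $\theta=\pi$, where $\gamma(\pi,q)=\tfrac{2\overline{\gamma}}{1+q^2}$. Consequently the inner integrand $e^{sx}e^{-x/\gamma(\theta,q)}$ decays for every $\theta$ precisely when $s<(1+q^2)/(2\overline{\gamma})$, and within this domain $h(x)=e^{sx}$ is a nonnegative continuous function whose product with the conditional density is integrable, so the interchange of integration sanctioned by the footnote to Lemma \ref{l2} applies. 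I would expect this bookkeeping of the convergence domain to be the only nontrivial aspect of the argument; everything else reduces to the single substitution above.
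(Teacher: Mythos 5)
Your proposal is correct and follows essentially the same route as the paper, which likewise proves the corollary by combining Lemma \ref{l2} with the fact that the exponential MGF is $(1-\overline{\gamma}s)^{-1}$. Your additional bookkeeping of the convergence region $s<(1+q^2)/(2\overline{\gamma})$ (consistent with the singularity of the closed form in (\ref{eq:001})) is a sound refinement that the paper leaves implicit, but it does not change the argument.
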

\begin{proof}
This result follows directly from Lemma \ref{l2} and the fact that the MGF of an exponentially  distributed random variable of average $\overline{\gamma}$ is given by $(1-\overline{\gamma}s)^{-1}$.
\end{proof}

This corollary provides an alternative demonstration of the integral representation of the pdf of a squared Hoyt random variable given in (\ref{eq:11_3}). Indeed, because of the way it has been constructed, it is clear that (\ref{eq:00}) is the MGF of a random variable which pdf is given by (\ref{eq:11_3}). On the other hand, the integral in (\ref{eq:00}) can be solved in closed-form, using \cite[eq. (3.613.1)]{Gradstein2007}, yielding
\begin{equation} \label{eq:001}
	\phi(s)=   
	\left[1-2\overline{\gamma}s+\frac{q^2(2\overline{\gamma}s)^2}{(1+q^2)^2} \right]^{-1/2},
\end{equation}
which is the well-known MGF of a squared Hoyt random variable \cite{AlouiniBook}. Therefore, from the uniqueness theorem of the MGF, (\ref{eq:01a}) and (\ref{eq:11_3}) are actually the same pdf.

Another benefit of the Hoyt transform method relies in the fact that the calculations are based on an integration involving a bounded trigonometric function; hence, this permits to find simple upper and lower bounds of the  performance metrics. These bounds can be found by taking into account that symbol error rate performance metrics are usually convex decreasing functions with respect to the SNR, whereas channel capacity metrics are typically concave increasing functions. The following proposition establishes a sufficient condition to determine the monotonicity and convexity of some important average performance metric functions.

\begin{proposition}
Let $h(\gamma)$ be a performance metric depending on the instantaneous SNR $\gamma$ and let $\overline{h}_R(\overline{\gamma})$ be defined as in Lemma 1. 
If $h(\gamma)$ is a decreasing convex (increasing concave) function of $\gamma$ in $[0,\infty)$, then $\overline{h}_R(\overline{\gamma})$ is a decreasing convex (increasing concave) function of $\overline{\gamma}$.
\end{proposition}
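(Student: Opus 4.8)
The plan is to exploit a change of variables that transfers the entire dependence on $\overline{\gamma}$ into the argument of $h$, after which monotonicity and convexity follow from elementary preservation properties. Starting from (\ref{eq:15}) with the full averaging interval $a=0$, $b=\infty$ (the case relevant for the bounds discussed above), I would substitute $t = x/\overline{\gamma}$ to obtain the equivalent representation
\[
	\overline{h}_R(\overline{\gamma}) = \int_0^\infty h(\overline{\gamma}\, t)\, e^{-t}\, dt .
\]
The point of this step is that $e^{-t}\,dt$ is now a fixed probability measure on $[0,\infty)$ independent of $\overline{\gamma}$, so the only place $\overline{\gamma}$ enters is through the scaled argument $\overline{\gamma}\,t$.

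First I would settle monotonicity. For each fixed $t\ge 0$ the map $\overline{\gamma}\mapsto h(\overline{\gamma}\,t)$ is $h$ evaluated along a nonnegative scaling of $\overline{\gamma}$, so it inherits the direction of monotonicity of $h$: if $h$ is decreasing then each integrand is decreasing in $\overline{\gamma}$, and integrating against the positive measure $e^{-t}\,dt$ preserves this, giving $\overline{h}_R$ decreasing. The increasing case is identical.

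For convexity I would argue without assuming differentiability of $h$. For each fixed $t\ge 0$, the function $\overline{\gamma}\mapsto h(\overline{\gamma}\,t)$ is the composition of $h$ with the linear map $\overline{\gamma}\mapsto t\,\overline{\gamma}$; since the composition of a convex function with an affine map is convex, each integrand is convex in $\overline{\gamma}$, and a nonnegatively weighted superposition of convex functions is again convex. Hence $\overline{h}_R$ is convex (the concave case is the same). Should one prefer to assume $h\in C^2$, the same conclusion drops out of differentiating twice under the integral sign,
\[
	\overline{h}_R''(\overline{\gamma}) = \int_0^\infty t^2\, h''(\overline{\gamma}\,t)\, e^{-t}\, dt ,
\]
whose sign matches that of $h''$, and similarly $\overline{h}_R'(\overline{\gamma}) = \int_0^\infty t\, h'(\overline{\gamma}\,t)\, e^{-t}\, dt$ matches the sign of $h'$.

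The main obstacle is analytic rather than conceptual. The interchange of differentiation with integration (or, equivalently, the passage of convexity through the integral) must be justified by an integrability/domination condition ensuring the integrals converge; this is supplied by the mild regularity of the metrics in question (nonnegativity, continuity and controlled growth), the same conditions invoked in the footnote to Lemma \ref{l2}. A second and more delicate point concerns the general interval $[a,b]$ in (\ref{eq:15}): the substitution $t=x/\overline{\gamma}$ turns the limits into the $\overline{\gamma}$-dependent values $a/\overline{\gamma}$ and $b/\overline{\gamma}$, so differentiating picks up Leibniz boundary terms that need not carry a definite sign. The clean argument therefore applies directly to the full average, which is precisely the quantity entering the convexity-based upper and lower bounds; a partial-interval version would require additionally controlling those boundary contributions.
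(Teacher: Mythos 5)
Your proposal is correct, and its core step --- the substitution $t = x/\overline{\gamma}$ that moves all $\overline{\gamma}$-dependence into the argument of $h$ while leaving a fixed measure $e^{-t}\,dt$ --- is exactly the paper's step (eq.~(\ref{eq:19a})). Where you diverge is in how you conclude from that representation: the paper assumes $h$ is twice differentiable, differentiates under the integral sign, and reads off the signs of $\overline{h}'_R$ and $\overline{h}''_R$ from those of $h'$ and $h''$ (eqs.~(\ref{eq:19b})--(\ref{eq:19c})), whereas your primary argument needs no differentiability at all: pointwise, $\overline{\gamma}\mapsto h(\overline{\gamma}t)$ is the composition of $h$ with a nonnegative linear map, so monotonicity and convexity are inherited, and both properties survive integration against a nonnegative measure. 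Your route is strictly more general (many metrics of interest, e.g.\ outage indicators or piecewise-defined rate functions, are not $C^2$) and also cleaner, since it sidesteps the justification of differentiating under the integral sign that the paper silently omits. Two further points in your proposal improve on the paper rather than deviate from it: (i) you correctly note that the derivative-based inequalities should be weak ($\leq$, $\geq$) unless strictness of $h'$, $h''$ is assumed --- the paper writes strict inequalities $<0$, $>0$, which is a minor slip; and (ii) you flag that the proposition's hypothesis nominally allows the general interval $[a,b]$ of (\ref{eq:15}) while the argument (the paper's and yours alike) genuinely requires $a=0$, $b=\infty$, since otherwise the substitution produces $\overline{\gamma}$-dependent limits and sign-indefinite boundary terms; the paper restricts to $[0,\infty)$ without comment, and your explicit acknowledgment of this restriction is the more careful statement.
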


\begin{proof}
If $h(\gamma)$ is a decreasing convex function then the first and second order derivatives of $h(\gamma)$ verify ${h'(\gamma)\leq0}$,  $h''(\gamma)\geq0$. 
By a simple change of variables in (\ref{eq:15}), considering the interval $[0,\infty)$, we can write
\begin{equation} \label{eq:19a}
	 \overline{h}_R(\overline{\gamma})= \int_0^\infty 
	 h(\overline{\gamma} x)  e^{-x}	 dx,
\end{equation}
and its first and second order derivatives verify
\begin{equation} \label{eq:19b}
	 \overline{h}'_R(\overline{\gamma})= \int_0^\infty 
	 h'(\overline{\gamma} x) x e^{-x}	 dx < 0,
\end{equation}
\begin{equation} \label{eq:19c}
	 \overline{h}''_R(\overline{\gamma})= \int_0^\infty 
	 h''(\overline{\gamma} x) x^2 e^{-x}	 dx > 0.
\end{equation}
Therefore, $\overline{h}_R(\overline{\gamma})$ is a decreasing convex  function of $\overline{\gamma}$.

Analogously, if $h(\gamma)$ is an increasing concave function, then the first and second order derivatives of $h(\gamma)$ verify $h'(\gamma)\geq0$,  $h''(\gamma)\leq0$. Thus, $\overline{h}_R(\overline{\gamma})$ is an increasing concave function.
\end{proof}

Now we present the aforementioned bounds in the next lemmas:

\begin{lemma} \label{L3} 
Let $\overline{h}_R(\overline{\gamma})$ and $\overline{h}_H(\overline{\gamma})$ be functions obtained by averaging a given function $h(\gamma)$ in, respectively, Rayleigh and Hoyt fading channels, where $\overline{\gamma}$ is the average SNR, and let $\overline{h}_R(\overline{\gamma})$ be a decreasing convex function. 
Then, the following inequality holds: 
\begin{equation} \label{eq:100}
	 \overline{h}_R(\overline{\gamma}) \leq 
	 \overline{h}_H(\overline{\gamma}) 
	 \leq\overline{h}_R  \left(\frac{2q^2}{1+q^2}\overline{\gamma}\right),
\end{equation}
\end{lemma}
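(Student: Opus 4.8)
The plan is to derive both bounds from two elementary properties of the weighting function $\gamma(\theta,q)$ defined in (\ref{eq:11_1}): its range over $\theta\in[0,\pi]$ and its average with respect to the uniform measure $\tfrac{1}{\pi}d\theta$. The upper bound will follow purely from the assumed monotonicity of $\overline{h}_R$, while the lower bound will follow from its convexity via Jensen's inequality. Since the Hoyt transform (\ref{eq:16}) expresses $\overline{h}_H(\overline{\gamma})$ as precisely the $\tfrac{1}{\pi}d\theta$-average of $\overline{h}_R(\gamma(\theta,q))$, both inequalities reduce to standard facts once these two properties are established.

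First I would establish the upper bound. Since $q\in[0,1]$ the coefficient $\tfrac{1-q^2}{1+q^2}$ is nonnegative, and because $\cos\theta\leq 1$ with equality at $\theta=0$, the function $\gamma(\theta,q)$ attains its minimum over $[0,\pi]$ at $\theta=0$, with value $\gamma(0,q)=\tfrac{2q^2}{1+q^2}\overline{\gamma}$. Hence $\gamma(\theta,q)\geq\tfrac{2q^2}{1+q^2}\overline{\gamma}$ for every $\theta$. Using that $\overline{h}_R$ is decreasing, this gives the pointwise inequality $\overline{h}_R(\gamma(\theta,q))\leq\overline{h}_R\!\left(\tfrac{2q^2}{1+q^2}\overline{\gamma}\right)$; integrating over $\theta$ and dividing by $\pi$ yields the right-hand inequality in (\ref{eq:100}).

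For the lower bound I would invoke Jensen's inequality. The key computation is that the average of the weighting function is exactly $\overline{\gamma}$: since $\tfrac{1}{\pi}\int_0^\pi\cos\theta\,d\theta=0$, one has $\tfrac{1}{\pi}\int_0^\pi\gamma(\theta,q)\,d\theta=\overline{\gamma}$. Because $\tfrac{1}{\pi}d\theta$ is a probability measure on $[0,\pi]$ and $\overline{h}_R$ is convex, Jensen's inequality gives $\tfrac{1}{\pi}\int_0^\pi\overline{h}_R(\gamma(\theta,q))\,d\theta\geq\overline{h}_R\!\left(\tfrac{1}{\pi}\int_0^\pi\gamma(\theta,q)\,d\theta\right)=\overline{h}_R(\overline{\gamma})$, which is the left-hand inequality in (\ref{eq:100}).

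I do not expect any serious obstacle here: the entire argument is a monotonicity estimate plus a single application of Jensen's inequality, and the only genuine computation is the vanishing of $\int_0^\pi\cos\theta\,d\theta$. The one point requiring a little care is verifying that the minimizer of $\gamma(\theta,q)$ sits at an endpoint, which relies on the sign of $\tfrac{1-q^2}{1+q^2}$ being fixed over the admissible range $q\in[0,1]$; this also clarifies that the argument of $\overline{h}_R$ in the upper bound is the smallest SNR seen by the transform, consistent with $\overline{h}_R$ being decreasing.
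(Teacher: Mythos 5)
Your proposal is correct and follows essentially the same route as the paper's own proof: the upper bound via the monotonicity of $\overline{h}_R$ together with the minimum of $\gamma(\theta,q)$ at $\theta=0$, and the lower bound via Jensen's inequality using that $\tfrac{1}{\pi}\int_0^\pi\gamma(\theta,q)\,d\theta=\overline{\gamma}$. Your write-up is in fact slightly more explicit than the paper's, since you spell out the vanishing of $\int_0^\pi\cos\theta\,d\theta$ and the sign of $\tfrac{1-q^2}{1+q^2}$, which the paper leaves implicit.
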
 

\begin{lemma} \label{L4} 
Let $\overline{h}_R(\overline{\gamma})$ and $\overline{h}_H(\overline{\gamma})$ be functions obtained by averaging a given function $h(\gamma)$ in, respectively, Rayleigh and Hoyt fading channels, where $\overline{\gamma}$ is the average SNR, and let $\overline{h}_R(\overline{\gamma})$ be a concave increasing function. 
Then, the following inequality holds:
\begin{equation} \label{eq:100_1}
	 \overline{h}_R  \left(\frac{2q^2}{1+q^2}\overline{\gamma}\right)  \leq
	 \overline{h}_H(\overline{\gamma}) \leq 
	 \overline{h}_R(\overline{\gamma}).
\end{equation}
\end{lemma}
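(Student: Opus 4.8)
The plan is to work directly from the integral representation of the Hoyt metric established in Lemma \ref{l2}, namely $\overline{h}_H(\overline{\gamma})=\frac{1}{\pi}\int_0^\pi \overline{h}_R(\gamma(\theta,q))\,d\theta$, and to exploit the two defining properties assumed for $\overline{h}_R$: that it is increasing and that it is concave. The key observation is that, as $\theta$ sweeps $[0,\pi]$, the argument $\gamma(\theta,q)=\overline{\gamma}\bigl(1-\frac{1-q^2}{1+q^2}\cos\theta\bigr)$ is monotone in $\cos\theta$; since $q\in[0,1]$ makes the coefficient of $\cos\theta$ nonpositive, $\gamma(\theta,q)$ increases with $\theta$ and attains its minimum value $\frac{2q^2}{1+q^2}\overline{\gamma}$ at $\theta=0$ and its maximum $\frac{2}{1+q^2}\overline{\gamma}$ at $\theta=\pi$. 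I expect each bound to follow from a single elementary property, so there is no genuine analytic obstacle; the only care needed is in matching the direction of each inequality to the correct property.

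For the lower bound I would use monotonicity alone. Since $\gamma(\theta,q)\geq \frac{2q^2}{1+q^2}\overline{\gamma}$ for every $\theta\in[0,\pi]$ and $\overline{h}_R$ is increasing, the integrand satisfies $\overline{h}_R(\gamma(\theta,q))\geq \overline{h}_R\!\left(\frac{2q^2}{1+q^2}\overline{\gamma}\right)$ pointwise. Averaging this constant lower bound over $\theta$ and using that the normalized measure $\frac{1}{\pi}d\theta$ integrates to one immediately yields $\overline{h}_H(\overline{\gamma})\geq \overline{h}_R\!\left(\frac{2q^2}{1+q^2}\overline{\gamma}\right)$, which is the left-hand inequality in (\ref{eq:100_1}).

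For the upper bound I would invoke Jensen's inequality for the concave function $\overline{h}_R$ against the uniform probability measure $\frac{1}{\pi}d\theta$ on $[0,\pi]$, which gives $\frac{1}{\pi}\int_0^\pi \overline{h}_R(\gamma(\theta,q))\,d\theta \leq \overline{h}_R\!\left(\frac{1}{\pi}\int_0^\pi \gamma(\theta,q)\,d\theta\right)$. The remaining step is to evaluate the inner average, which reduces to computing $\frac{1}{\pi}\int_0^\pi \cos\theta\,d\theta=0$; hence the mean of $\gamma(\theta,q)$ over $\theta$ is exactly $\overline{\gamma}$, in agreement with $\mathbb{E}\{R\}=\overline{\gamma}$. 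Substituting, the upper bound collapses to $\overline{h}_R(\overline{\gamma})$, completing (\ref{eq:100_1}).

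Finally, I would remark that this argument is the exact mirror image of the proof of Lemma \ref{L3}: interchanging the roles of Jensen's inequality and the monotonicity bound (concave/increasing in place of convex/decreasing) simply swaps which endpoint supplies the upper and which supplies the lower bound, so the two lemmas can be presented as a single unified computation.
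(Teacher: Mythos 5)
Your proof is correct and is exactly the argument the paper intends: the paper proves Lemma \ref{L3} explicitly (monotonicity for one bound, Jensen's inequality for the other, using $\frac{1}{\pi}\int_0^\pi \gamma(\theta,q)\,d\theta = \overline{\gamma}$) and states that Lemma \ref{L4} follows ``analogously,'' which is precisely the mirrored computation you carry out. Your closing remark that the two lemmas are a single unified argument with the roles of the two inequalities swapped matches the paper's presentation.
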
 

\begin{proof}
Let us first demonstrate (\ref{eq:100}): As $\overline{h}_R(\overline{\gamma})$ is a decreasing function of $\overline{\gamma}$ and the lowest value of $\gamma(\theta,q)$ is obtained for $\theta=0$, an upper bound of $\overline{h}_H(\overline{\gamma})$ can be found as
\begin{equation} \label{eq:101}
\begin{split}
	 \overline{h}_H(\overline{\gamma})= &
	\frac{1}{\pi} \int_0^\pi \overline{h}_R(\gamma(\theta,q)) d\theta
	\leq \frac{1}{\pi} \int_0^\pi \overline{h}_R(\gamma(0,q)) d\theta
	 \\ & =\overline{h}_R(\gamma(0,q))=\overline{h}_R  \left(\frac{2q^2}{1+q^2}\overline{\gamma}\right).
\end{split}
\end{equation}
A lower bound of $\overline{h}_H(\overline{\gamma})$ can be found from Jensen's inequality and taking into account that $\overline{h}_R(\overline{\gamma})$ is convex:
\begin{equation} \label{eq:102}
\begin{split}
	 \overline{h}_R(\overline{\gamma})= &
	 \overline{h}_R \left(\frac{1}{\pi}\int_0^\pi \gamma(\theta,q) d\theta\right) \leq \\ &
	\frac{1}{\pi} \int_0^\pi \overline{h}_R(\gamma(\theta,q)) d\theta =
	 \overline{h}_H(\overline{\gamma})
\end{split}
\end{equation}
On the other hand, (\ref{eq:100_1}) can be obtained analogously when $\overline{h}_R(\overline{\gamma})$ is a concave increasing function.
\end{proof}

The bounds in Lemmas \ref{L3} and \ref{L4} state that performance in Hoyt fading, for a given average SNR, will be bounded between that of Rayleigh fading with the same average SNR and that of Rayleigh fading when the average SNR is scaled by a factor $2q^2/(1+q^2)$. Note also that the derived bounds are asymptotically exact as $q \rightarrow 1$.

We have introduced a general approach to the analysis of wireless communication systems operating under Hoyt fading. In the following sections, we use this technique to derive novel results for different performance metrics of interest.


\section{Channel Capacity}
\label{secCap}

The channel capacity in Rayleigh fading channels was characterized in \cite{Alouini1999} for different transmission policies. Even though closed-form expressions were attained for the Rayleigh case, the channel capacity in Hoyt fading channels is much more complicated to evaluate. In fact, only infinite series expressions of very complicated argument are available in the literature \cite{Cheng2003,Subadar2010}, which do not facilitate the extraction of any insights. Using the Hoyt transform method, we will now show how to use readily available performance results derived for Rayleigh channels to directly obtain the same performance metric in Hoyt fading.

The capacity per bandwidth unit using optimum rate adaptation (ORA) policy with constant transmit power is calculated as
\begin{equation}
\frac{C_{\text{ora}}}{B}=\overline{C}=\int_{0}^{\infty}\log_2\left(1+\gamma\right)f_{\gamma}(\gamma)d\gamma,
\end{equation}
where $\log$ is the natural logarithm. This capacity metric obtained by averaging the Shannon capacity on a flat-fading channel using the pdf of $\gamma$, and has dimensions of bps/Hz. For a communication system operating under Rayleigh fading with average SNR at the receiver side given by $\overline\gamma$, a closed-form expression for this metric was obtained in \cite[eq. 34]{Alouini1999} as
\begin{equation}
\overline C^{\text{Ray}}=\log_2 (e) e^{1/{\overline\gamma}}E_1(1/{\overline\gamma}),
\end{equation}
where $E_1(\cdot)$ is the exponential integral function. Since $\overline C^{\text{Ray}}$ is computed in the form stated in Lemma \ref{l2}, then the we can directly calculate this metric considering a Hoyt fading channel as
\begin{equation}
\label{eqCora}
\overline C^{\text{Hoyt}}=\frac{\log_2 (e)}{\pi} \int_{0}^{\pi} e^{1/{\gamma(\theta,q)}}E_1\left(\frac{1}{\gamma(\theta,q)}\right)d\theta.
\end{equation}
Note that (\ref{eqCora}) is given in terms of a finite integral over a smooth and well-behaved function. Hence, it can be calculated very accurately. A simple lower bound can be found as
\begin{equation} \label{eq:26b}
	 \overline C^{\text{Hoyt}} \geq \frac{1}{\ln 2}
	 e^{ (1+q^2) /(2q^2\overline{\gamma}) }   
	E_1 \left(\frac{(1+q^2)}{2q^2\overline{\gamma}}\right).
\end{equation}

We now provide asymptotic capacity results in the low-SNR and high-SNR regimes. In the first situation, it is known that the asymptotic capacity in Rayleigh fading is given by \cite{diRenzo2010}
\begin{align}
\label{caplow}
	 \overline C_{\overline\gamma\Downarrow}\approx \log_2 e \frac{d\mathcal{M}_{\gamma}(s)}{ds}|_{s=0} =  \log_2 e \overline\gamma,
\end{align}
where $\mathcal{M}_{\gamma}(s)$ represents the MGF of the SNR. Since (\ref{caplow}) is obtained through linear operations over the distribution of the SNR, we can calculate the asymptotic capacity in Hoyt fading channels in the low-SNR regime as
\begin{align}
\label{caplow_b}
	 \overline C_{\overline\gamma\Downarrow}^{\text{Hoyt}}\approx \frac{\log_2 e}{\pi} \int_{0}^{\pi}\overline\gamma(\theta,q)d\theta = \log_2 e \overline\gamma,
\end{align}
which is the same as in the Rayleigh case, but also the same as in the Rician case \cite{Rao2015}.

In the high-SNR regime, the asymptotic capacity can be expressed in the following form \cite{Yilmaz2012}
\begin{equation}
\label{asy01}
\overline C_{\overline\gamma\Uparrow}\approx \log_2 e \cdot \frac{\partial}{\partial n} \mathbb{E}\left[\gamma^n\right]|_{n=0}.
\end{equation}
which is asymptotically exact. After some manipulations, we can equivalently express (\ref{asy01}) as
\begin{align}
\label{eqasy02}\overline C_{\overline\gamma\Uparrow}&\approx \log_2(e)\cdot \log \overline\gamma - \mu, 
\end{align}
where $\mu$ is a constant value independent of the average SNR, but dependent on the specific channel model. The AWGN case yields a value of $\mu=0$, which is usually taken as a reference. The effect of fading causes $\mu>0$ and therefore there is a non-nil capacity loss due to fading for a given value of $\overline\gamma$. In the case of Rayleigh fading, it is a well-known result that ${\mu=\log_2(e)\cdot\gamma_e}$, where $\gamma_e$ is the Euler-Mascheroni constant. Therefore, for a fixed value of $\overline\gamma$, the parameter $\mu$ can be regarded as the capacity loss with respect to the AWGN case, being $\mu_{Rayleigh}\approx0.83\,\text{bps/Hz}$.

There has been a renewed interest in the research community on different fading models that allow for characterizing a more severe fading condition than Rayleigh fading \cite{Frolik2008,Matolak2011}. Different models, such as the Two-Ray \cite{Frolik2008} or Hoyt fading meet this condition. Very recently, the value of $\mu$ for the Two-Ray fading channel was derived \cite{Rao2015}, yielding a capacity loss of $\mu_{\text{Two-ray}}=1\,\text{bps/Hz}$. This shows that the Two-Ray model is indeed more detrimental than Rayleigh fading, as it provokes a larger capacity loss. However, the value of $\mu$ for Hoyt fading is unknown in the literature to the best of our knowledge. Using Lemma \ref{l2} in (\ref{eqasy02}), we have
\begin{align}
\overline C_{\overline\gamma\Uparrow}^{\text{Hoyt}}&\approx \tfrac{\log_2(e)}{\pi}\int_0^{\pi} \log\left\{\overline{\gamma}\left(1-\tfrac{1-q^2}{1+q^2}\cos\theta\right)\right\}d\theta - \mu_{\text{Rayleigh}}.
\end{align}

After some straightforward manipulations, we have
\begin{align}
\overline C_{\overline\gamma\Uparrow}^{\text{Hoyt}}&\approx \log_2(e)\log \overline\gamma - \mu_{\text{Rayleigh}} - \log_2(e)\log\left\{\tfrac{2(1+q^2)}{(1+q)^2}\right\}.
\end{align}
Therefore, the capacity loss in Hoyt fading with respect to the AWGN case is given by
\begin{align}
\label{asy03}
\mu_{\text{Hoyt}} = \log_2(e)\gamma_e + \log_2(e)\log\left\{\frac{2(1+q^2)}{(1+q)^2}\right\}.
\end{align}
The second term in (\ref{asy03}) can be regarded as the capacity loss with respect to the Rayleigh case, and therefore equals 0 if $q=1$. In the limit case of $q=0$, corresponding to the one-sided Gaussian distribution, we have the larger capacity loss given by
\begin{align}
\label{asy04}
\mu_{\text{Hoyt}}^{q=1} = \log_2(e)\gamma_e +1 \approx 1.83 \text{bps/Hz}.
\end{align}
Therefore, the capacity loss of Hoyt fading with respect to AWGN can be as large as $1$ bps/Hz more than in the Rayleigh case. These results are new to the best of our knowledge.


\section{Outage Probability of Secrecy Capacity}
\label{secSecr}
\subsection{Problem Definition}

We consider the problem in which two legitimate peers, say Alice and Bob, wish to communicate over a wireless link in the presence of an eavesdropper, say Eve, that observes their transmission through a different link. Let us denote as $\gamma_b$ the instantaneous SNR at the receiver for the link between Alice and Bob, and let $\gamma_e$ be the instantaneous SNR at the eavesdropper for the wiretap link between Alice and Eve.

Unlike the classical setup for the Gaussian wiretap channel \cite{Cheong1978}, it is known that fading provides an additional layer of security to the communication between Alice and Bob \cite{Barros2006,Bloch2008}, allowing for a secure transmission even when Eve experiences a better SNR than the legitimate receiver Bob.

According to the information-theoretic formulation in \cite{Barros2006}, the secrecy capacity in this scenario is defined as
\begin{equation}
C_S=\left[C_B-C_E\right]^+,
\label{eq:sec01}
\end{equation}
 where $[a]^+ \equiv \max \{ a,0 \}$, $C_B$ is the capacity of the main channel
\begin{equation}
C_B=\log\left(1+\gamma_b\right),
\label{eq:sec02}
\end{equation}
and $C_E$ is the capacity of the eavesdropper channel
\begin{equation}
C_E=\log\left(1+\gamma_e\right).
\label{eq:sec03}
\end{equation}
For the sake of simplicity, we assumed a normalized bandwidth $B=1$ in the previous capacity definitions. Since
channel capacity is by definition a non-negative metric, the secrecy capacity for a given realization of the fading links is therefore given by
\begin{equation} \label{eq:sec04}
C_S = \left[\log \left( {1 + \gamma _b } \right) - \log \left( {1 + \gamma _e } \right)\right]^+.
\end{equation}

In \cite{Barros2006,Bloch2008}, the physical layer security of the communication between Alice and Bob in the presence of Eve was characterized in terms of several performance metrics of interest, assuming that both wireless links undergo Rayleigh fading. Specifically, closed-form expressions were derived for the probability of strictly positive secrecy capacity ${\P(C_S>0)}$, and for the outage probability of the secrecy capacity ${\P(C_S<R_S)}$, where $R_S$ is defined as the threshold rate under which secure communication cannot be achieved. As these expressions will be used in the forthcoming analysis, we reproduce them for the readers' convenience
\begin{align}
\label{eq:sec05}
\P(C_S>0)&=\frac{\bar{\gamma}_b}{\bar{\gamma}_b+\bar{\gamma}_e},\\
\label{eq:sec06}
\P(C_S<R_S)&=1-\frac{\bar{\gamma}_b}{\bar{\gamma}_b+2^{R_S}\bar{\gamma}_e}\exp{\left(-\frac{2^{R_S}-1}{\bar{\gamma}_b}\right)},
\end{align}
where $\bar{\gamma}_b$ and $\bar{\gamma}_e$ are the average SNRs at Bob and Eve, respectively. 

We note that ${\P(C_S>0)=1-\P(C_S<R_S)_{R_S=0}}$; hence, the probability of strictly positive secrecy capacity will be considered as a particular case of the secrecy outage probability.

\subsection{Secrecy Outage Probability Analysis}

Let us consider the scenario where the wireless links experience a more severe fading than Rayleigh, say Hoyt, where $q_b$ and $q_e$ represent the Hoyt shape parameters for the desired and eavesdropper links, respectively. We also define the eccentricities associated with both Hoyt distributions as $\epsilon_b=\frac{1-q_b^2}{1+q_b^2}$ and $\epsilon_e=\frac{1-q_e^2}{1+q_e^2}$

According to the Hoyt transform, the outage probability of the secrecy capacity in Hoyt fading channels is given by
\begin{align}
\label{eq:sec07}
\P(C_S  < R_S ) = &1 - \frac{1}{{\pi ^2 }}\int_0^\pi  \int_0^\pi  \exp \left( { - \tfrac{{2^{R_S }  - 1}}{{\bar \gamma _b \left( {1 - \epsilon _b \cos \theta _b } \right)}}} \right)  \\& \nonumber \times
\tfrac{{\bar \gamma _b \left( {1 - \epsilon _b \cos \theta _b } \right)}}{{\bar \gamma _b \left( {1 - \epsilon _b \cos \theta _b } \right) + 2^{R_S } \bar \gamma _e \left( {1 - \epsilon _e \cos \theta _e } \right)}}d\theta _e d\theta _b.
\end{align}
We observe that the integral over $\theta_e$ can be solved, and hence we obtain
\begin{align}
\label{eq:sec08}
P(C_S  < R_S ) =& 1 - \frac{1}{\pi }\int_0^\pi \exp \left( { - \tfrac{{2^{R_S }  - 1}}{{\bar \gamma _b \left( \theta  \right)}}} \right) \\& \nonumber \times
\tfrac{{\bar \gamma _b \left( \theta  \right)}}{{\bar \gamma _b \left( \theta  \right) + 2^{R_S } \bar \gamma _e }}\tfrac{1}{{\sqrt {1 - \left( {\frac{{\epsilon _e 2^{R_S } \bar \gamma _e }}{{\bar \gamma _b \left( \theta  \right) + 2^{R_S } \bar \gamma _e }}} \right)^2 } }}d\theta,
\end{align}
where ${\bar \gamma _b \left( \theta  \right)}={\bar \gamma _b \left( {1 - \epsilon _b \cos \theta} \right)}$. Hence, the secrecy capacity OP is given in terms of a very simple integral form. This result is new in the literature to the best of our knowledge, and shows the strength and versatility of the Hoyt transform method to derive new performance metrics for Hoyt fading by leveraging existing results for Rayleigh fading.

Directly from (\ref{eq:sec08}), the probability of strictly positive secrecy capacity can be easily obtained as
\begin{align}
\label{eq:sec09}
P(C_S {\rm{ > }}0) = \frac{1}{\pi }\int_0^\pi  {\tfrac{{\bar \gamma _b \left( \theta  \right)}}{{\bar \gamma _b \left( \theta  \right) + \bar \gamma _e }}\tfrac{1}{{\sqrt {1 - \left( {\frac{{\varepsilon _e \bar \gamma _e }}{{\bar \gamma _b \left( \theta  \right) + \bar \gamma _e }}} \right)^2 } }}} d\theta.
\end{align}

Expressions (\ref{eq:sec08}) and (\ref{eq:sec09}) admit an easy manipulation, in order to extract insights on the effect of fading severity into the secrecy capacity OP. One clear example arises if we assume that the eavesdropper link suffers from a more severe fading compared to the desired link: this can be achieved by setting $q_b=1$, and seeing what is the impact of $q_e$. In this case, the integral over $\theta$ disappears, yielding to a closed-form expression for the secrecy capacity OP, and hence  
\begin{align}
\label{eq:sec09b}
\left. {P(C_S  < R_S )} \right|_{q_b = 1}  = 1 - \frac{{\bar \gamma _b }}{{\bar \gamma _b  + 2^{R_S } \bar \gamma _e }}\frac{{\exp \left( { - \frac{{2^{R_S }  - 1}}{{\bar \gamma _b }}} \right)}}{{\sqrt {1 - \left( {\frac{{\varepsilon _e 2^{R_S } \bar \gamma _e }}{{\bar \gamma _b  + 2^{R_S } \bar \gamma _e }}} \right)^2 } }}
\end{align}

Comparing (\ref{eq:sec06}) and (\ref{eq:sec09b}), we observe that both have similar form, and the effect of the distribution of the fading for the eavesdropper link is captured by a multiplicative term that modulates the result for the Rayleigh case. Since this additional term is always larger than one, it is clear that for a fixed value of $\bar \gamma _b$ and $\bar \gamma _e$, the secrecy capacity OP $P(C_S  < R_S )$ decreases with $q_e$. This illustrates the fact that when Eve suffers from a more severe fading, then the probability of having a secure communication between Alice and Bob grows.

A similar conclusion can be extracted when examining the probability of strictly positive secrecy capacity in this particular scenario:
\begin{align}
\label{eq:sec10}
\left. {P(C_S  > 0 )} \right|_{q_b = 1}  =\frac{{\bar \gamma _b }}{{\bar \gamma _b  + \bar \gamma _e }}\frac{1}{{\sqrt {1 - \left( {\frac{{\epsilon _e \bar \gamma _e }}{{\bar \gamma _b  + \bar \gamma _e }}} \right)^2 } }}.
\end{align}
Again, the effect of considering $q_e<1$ (i.e. a more severe fading than Rayleigh for the eavesdropper link) causes that ${P(C_S  > 0 )}$ grows as $q_e$ is reduced.


\section{Outage Probability}
\label{secOP}
\subsection{Outage Probability in Noise-Limited Scenarios}

The outage probability is one of the most important performance metrics in wireless communications, and it is defined as the probability that the received SNR falls below a predefined threshold $\gamma_o$. Thus, the outage probability is given by ${P_{out}(\gamma_o)=F(\gamma_o)}$, where $F(\cdot)$ is the cdf of the received SNR. 
Therefore, the outage probability under Hoyt fading can be written, from (\ref{eq:14}), as
\begin{equation} \label{eq:120}
P_{out}(\gamma_o)=1 - \frac{1}{\pi} \int_0^\pi e^{-\gamma_o/\gamma(\theta,q)} d\theta,
\end{equation}
which can be very efficiently computed, as the integrand varies smoothly for all possible values of parameter $q$. Conversely, the integrand of the integral representation of the outage probability given in \cite{Tavares2010} becomes sharply peaked when $q$ is close to 0. 

It is also interesting to note that, as the outage probability in Rayleigh fading (given by $ 1- e^{ - \gamma_o /\overline{\gamma} }$) is an increasing concave function with respect to $\overline{\gamma}$, the outage probability in Hoyt fading can be bounded as
\begin{equation} \label{eq:25b}
	 1- e^{ - \gamma_o /\overline{\gamma} } \leq
	 P_{out}(\gamma_o) \leq 1-
	 e^{ - (1+q^2) \gamma_o /(2q^2\overline{\gamma}) }.
\end{equation}
 
Alternatively, the outage probability can be expressed in closed-form using (\ref{eq:02}) and (\ref{eq:13}) as
\begin{equation} \label{eq:23}
\begin{split}	
P_{out}(\gamma_o)= & Q\left(a(q)\sqrt{\frac{\gamma_o}{\overline{\gamma}}},b(q)\sqrt{\frac{\gamma_o}{\overline{\gamma}}}\right)
\\ & -Q\left(b(q)\sqrt{\frac{\gamma_o}{\overline{\gamma}}},a(q)\sqrt{\frac{\gamma_o}{\overline{\gamma}}}\right)
\end{split}
\end{equation}
or, equivalently, from (\ref{eq:03}) and (\ref{eq:13}),
\begin{equation} \label{eq:24}
\begin{split}	
P_{out}(\gamma_o)= & 2Q\left(a(q)\sqrt{\frac{\gamma_o}{\overline{\gamma}}},b(q)\sqrt{\frac{\gamma_o}{\overline{\gamma}}}\right)
	\\ & -e^{-d(q)\gamma_o/\overline{\gamma}}I_0\left(c(q){\frac{\gamma_o}{\overline{\gamma}}}\right)-1,
\end{split}
\end{equation}
where
\begin{equation} \label{eq:25} \nonumber
\begin{split}
& a(q)=\frac{1+q } {2q}\sqrt{1+q^2} , \ 
b(q)=\frac{1-q } {2q}\sqrt{1+q^2},  \\ &
c(q)=\frac{1-q^4 }{4q^2}, \ \ \
d(q)=\frac{(1+q^2)^2} {4q^2},
\end{split}
\end{equation}
which were previously derived in \cite{Paris2009} in a slightly different way. Note, however, that (\ref{eq:120}) is easier to compute than these closed-form expressions because the Marcum $Q$-function is actually a two-fold integral, as the Bessel function is an integral itself. Although a finite-range integral can be used to compute the Marcum $Q$-function \cite[eqs. (4.39), (4.42)] {AlouiniBook}, the integrand is sharply peaked for low values of $q$, which complicates the computation. Mathematical software packages often use truncated infinite power series \cite{Shnidman1989} to compute the Marcum $Q$-function.

\subsection{Outage Probability with Co-Channel Interference}
In many practical scenarios, the signal of interest is affected by co-channel interference. In this case the outage probability can be defined as the probability that the signal-to-interference-plus-noise ratio (SINR) falls below a threshold level $\gamma_o$. More formally, let $\mathcal{X}$ denote the SNR of the signal of interest and $\mathcal{Y}$ the total interference-to-noise ratio (INR). The outage probability will thus be defined as
\begin{equation} \label{eq:40}
P_{out}(\gamma_o)=P(\mathcal{X}<\gamma_o(\mathcal{Y}+1))=F_\mathcal{X}(\gamma_o(\mathcal{Y}+1)),  
\end{equation}
where $F_\mathcal{X}(\cdot)$ is the cdf of $\mathcal{X}$. In scenarios where the background noise can be neglected, the signal-to-interference ratio (SIR) is typically considered instead of the SINR, which usually simplifies the analysis.

We derive in this section a simple expression for the outage probability for the case when the signal of interest undergoes Hoyt fading and there is an arbitrary number of interferers. Our analysis will be quite general, as we assume that background noise is not necessarily neglected and each interferer undergoes an arbitrary fading. The main result for this scenario is presented in the following proposition.

\begin{proposition} \label{lema 3}
Let the desired signal undergo Hoyt fading, and the interfering signals experience arbitrarily distributed fading. Then, the outage probability in the presence of co-channel interference and background noise is given by
\begin{equation} \label{eq:44}
	P_{out}(\gamma_o)=  1 - 
	\frac{1}{\pi} \int_0^\pi e^{-\gamma_o/\gamma(\theta,q)}
	\phi_\mathcal{Y} \left(-\frac {\gamma_o} {\gamma(\theta,q)}\right) 
	d\theta.
\end{equation}
\end{proposition}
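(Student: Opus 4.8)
The plan is to compute the outage probability by conditioning on the interference $\mathcal{Y}$ and then invoking the integral form of the Hoyt cdf from Lemma~\ref{teorema 1}. Assuming, as is standard, that the desired-signal SNR $\mathcal{X}$ and the aggregate INR $\mathcal{Y}$ are independent, the definition (\ref{eq:40}) can be rewritten as an expectation over $\mathcal{Y}$:
\begin{equation}
P_{out}(\gamma_o) = \mathbb{E}_\mathcal{Y}\left[F_\mathcal{X}\bigl(\gamma_o(\mathcal{Y}+1)\bigr)\right] = \int_0^\infty F_\mathcal{X}\bigl(\gamma_o(y+1)\bigr)\, f_\mathcal{Y}(y)\, dy .
\end{equation}

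First I would substitute the squared-Hoyt cdf (\ref{eq:14}) evaluated at $\gamma_o(y+1)$, and use the factorization $e^{-\gamma_o(y+1)/\gamma(\theta,q)} = e^{-\gamma_o/\gamma(\theta,q)}\, e^{-\gamma_o y/\gamma(\theta,q)}$. The constant term of the cdf integrates to $1$ against $f_\mathcal{Y}$ by normalization, leaving a double integral over $\theta$ and $y$. Next I would interchange the order of integration and factor the $\theta$-dependent exponential outside the $y$-integral; the inner $y$-integral is then precisely $\mathbb{E}_\mathcal{Y}\bigl[e^{-(\gamma_o/\gamma(\theta,q))\mathcal{Y}}\bigr] = \phi_\mathcal{Y}\!\left(-\gamma_o/\gamma(\theta,q)\right)$, the MGF of $\mathcal{Y}$ evaluated at the negative argument $s=-\gamma_o/\gamma(\theta,q)$. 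Recognizing this MGF yields (\ref{eq:44}).

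The argument is essentially routine once (\ref{eq:14}) is available; the only step needing justification is the interchange of the $\theta$- and $y$-integrations, which follows immediately from Tonelli's theorem since the integrand is nonnegative, so I do not anticipate any real obstacle. A shorter alternative route, worth mentioning, is to first obtain the Rayleigh counterpart by the same conditioning argument with an exponential $\mathcal{X}$, namely $P_{out}^{\text{Ray}}(\gamma_o) = 1 - e^{-\gamma_o/\overline{\gamma}}\,\phi_\mathcal{Y}(-\gamma_o/\overline{\gamma})$, and then apply the Hoyt transform of Lemma~\ref{l2} with the substitution $\overline{\gamma}\mapsto\gamma(\theta,q)$. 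This reproduces (\ref{eq:44}) at once and makes explicit that the proposition is simply another instance of the general Hoyt transform framework.
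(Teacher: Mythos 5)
Your proposal is correct and follows essentially the same route as the paper's own proof: condition on $\mathcal{Y}$, substitute the integral form (\ref{eq:14}) of the squared-Hoyt cdf, interchange the $\theta$- and $y$-integrations, and identify the inner integral as $\phi_\mathcal{Y}\left(-\gamma_o/\gamma(\theta,q)\right)$. Your explicit appeal to Tonelli's theorem and the remark that the result is also an instance of the Hoyt transform of Lemma~\ref{l2} applied to the Rayleigh-plus-interference expression are sound refinements, but they do not change the substance of the argument.
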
 
where $\phi_\mathcal{Y}$ is the MGF of $\mathcal{Y}$.

\begin{proof}
As $\mathcal{X}$ follows a squared Hoyt distribution, from (\ref{eq:14}) we have that the conditional (on $\mathcal{Y}$) outage probability can be written as
\begin{equation} \label{eq:41}
	\left.P_{out}(\gamma_o)\right|_{\mathcal{Y}=y}=1 - \frac{1}{\pi} \int_0^\pi e^{-\gamma_o(y+1)/\gamma(\theta,q)} d\theta.
\end{equation}
Averaging over $\mathcal{Y}$, the unconditional outage probability will be
\begin{equation} \label{eq:42}
	P_{out}(\gamma_o)=1 - 
	\int_0^\infty
	\frac{1}{\pi} \int_0^\pi e^{-\gamma_o(y+1)/\gamma(\theta,q)} d\theta f_\mathcal{Y}(y) dy,
\end{equation}
where $f_\mathcal{Y}(\cdot)$ denotes the pdf of $\mathcal{Y}$.
By interchanging the order of integration we can rewrite (\ref{eq:42}) as
\begin{equation} \label{eq:43}
\begin{split}
	P_{out}(\gamma_o)= & 1 - 
	\frac{1}{\pi} \int_0^\pi e^{-\gamma_o/\gamma(\theta,q)}
	\\ & \times \left[\int_0^\infty e^{-y \gamma_o/\gamma(\theta,q)}f_\mathcal{Y}(y) dy \right]
	d\theta.
\end{split}
\end{equation}
By noticing that the MGF of a positive random variable $\alpha$ is defined as
$\phi_\alpha(s)=\int_0^\infty e^{st}f_\alpha(t)dt$, with $f_\alpha (\cdot)$ being the pdf of $\alpha$, (\ref{eq:44}) is obtained.
\end {proof}

Proposition \ref{lema 3} allows to analyze the outage probability in Hoyt fading channels with \textit{arbitrarily distributed} co-channel interference in the presence of background noise. Since the MGF is given in closed-form for the most relevant fading distributions, then (\ref{eq:44}) can be easily evaluated as a finite-range integral. The particular case where the background noise can be neglected is presented in the next corollary:

\begin{corollary}
When the total interference power is much higher that the background noise and the latter can be neglected, the outage probability is given by
\begin{equation} \label{eq:45}
	P_{out}(\gamma_o)=  1 - 
	\frac{1}{\pi} \int_0^\pi 
	\phi_\mathcal{Y} \left(-\frac {\gamma_o} {\gamma(\theta,q)}\right) 
	d\theta.
\end{equation}
\end{corollary}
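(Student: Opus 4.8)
The plan is to mirror the derivation of Proposition \ref{lema 3}, making only the single modification that neglecting the background noise removes the additive noise contribution from the outage condition. When interference dominates and noise is discarded, the relevant figure of merit becomes the signal-to-interference ratio rather than the SINR, so the outage event in (\ref{eq:40}) simplifies from $\mathcal{X}<\gamma_o(\mathcal{Y}+1)$ to $\mathcal{X}<\gamma_o\mathcal{Y}$. The ``$+1$'' in Proposition \ref{lema 3} was precisely what generated the factor $e^{-\gamma_o/\gamma(\theta,q)}$ in (\ref{eq:44}); suppressing it should therefore yield (\ref{eq:45}) verbatim.

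Concretely, I would first condition on $\mathcal{Y}=y$. Since $\mathcal{X}$ is squared-Hoyt distributed, its cdf is given by (\ref{eq:14}), so the conditional outage probability reads $\left.P_{out}(\gamma_o)\right|_{\mathcal{Y}=y}=1-\frac{1}{\pi}\int_0^\pi e^{-\gamma_o y/\gamma(\theta,q)}\,d\theta$, which differs from (\ref{eq:41}) only in that the exponent now carries $\gamma_o y$ rather than $\gamma_o(y+1)$. Next I would remove the conditioning by averaging against $f_\mathcal{Y}(y)$ over $[0,\infty)$ and then interchange the order of the $y$- and $\theta$-integrals. After the swap, the inner integral over $y$ is exactly $\int_0^\infty e^{-y\gamma_o/\gamma(\theta,q)}f_\mathcal{Y}(y)\,dy$, which by the MGF definition $\phi_\mathcal{Y}(s)=\int_0^\infty e^{sy}f_\mathcal{Y}(y)\,dy$ equals $\phi_\mathcal{Y}\!\left(-\gamma_o/\gamma(\theta,q)\right)$. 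Collecting terms then gives (\ref{eq:45}).

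There is no substantive obstacle here; the only point requiring a word of care is the interchange of integration order, which is licensed by Tonelli's theorem because the integrand $e^{-y\gamma_o/\gamma(\theta,q)}f_\mathcal{Y}(y)$ is nonnegative and bounded above by $f_\mathcal{Y}(y)$ on the product domain $[0,\infty)\times[0,\pi]$, with $\gamma(\theta,q)>0$ for admissible $q$ by (\ref{eq:11_1}). Alternatively, and even more directly, one may simply read (\ref{eq:45}) off Proposition \ref{lema 3}: setting the noise contribution to zero amounts to formally replacing $\gamma_o(\mathcal{Y}+1)$ by $\gamma_o\mathcal{Y}$ throughout the proof, which deletes exactly the $e^{-\gamma_o/\gamma(\theta,q)}$ prefactor while leaving the MGF argument $-\gamma_o/\gamma(\theta,q)$ unchanged.
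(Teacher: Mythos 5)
Your proposal is correct and takes essentially the same approach as the paper: the paper's own proof simply redefines the outage event as $P(\mathcal{X}<\gamma_o\mathcal{Y})=F_\mathcal{X}(\gamma_o\mathcal{Y})$ and states that the result follows ``by the same steps as in Proposition \ref{lema 3},'' which are exactly the steps (conditioning on $\mathcal{Y}=y$, averaging over $f_\mathcal{Y}$, interchanging the order of integration, and recognizing the MGF) that you spell out explicitly. Your added Tonelli justification for the interchange is a harmless strengthening of what the paper leaves implicit.
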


\begin{proof}
When the background noise can be neglected, the outage probability can be defined as
\begin{equation} \label{eq:46}
P_{out}(\gamma_o)=P(\mathcal{X}<\gamma_o\mathcal{Y})=F_\mathcal{X}(\gamma_o\mathcal{Y}),  
\end{equation}
and by following the same steps as in Proposition \ref{lema 3}, (\ref{eq:45}) is obtained.
\end{proof}

When $L$ independent interferers are considered, the MGF of $\mathcal{Y}$ will be
\begin{equation} \label{eq:47}
\phi_\mathcal{Y}(s)=\prod_{1=1}^L \phi_{\mathcal{Y}_i}(s), 
\end{equation}
where $\phi_{\mathcal{Y}_i}(\cdot)$ is the MGF corresponding to the \emph{i}-th interferer. 

In recent years, several general fading models such as $\eta$-$\mu$ or $\kappa$-$\mu$ have been proposed, showing a better fit to experimental measurements than traditional fading models in many different environments \cite{Yacoub2007}. The $\eta$-$\mu$ model includes Hoyt, Nakagami-$m$, Rayleigh and one-sided Gaussian fading as particular cases, whereas the $\kappa$-$\mu$ model includes Rician, Nakagami-$m$ and Rayleigh as particular cases. In spite of their generality, these models have a MGF that can be expressed in simple terms \cite{Ermolova2008}. The MGF of $\eta$-$\mu$ is given by
\begin{equation} \label{eq:48}
\phi(s)= \left(\frac{4\mu^2 h}{ (2(h-H)\mu -s\overline{\gamma})  (2(h+H)\mu -s\overline{\gamma})  }\right)^\mu,
\end{equation}
and the distribution is defined in two different formats. In format 1 we have $H=(\eta^{-1}-\eta)/4$ and $h=(2+\eta^{-1}+\eta)/4$, with $0<\eta<\infty$, while in format 2: $H=\eta/(1-\eta^2)$ and $h=1/(1-\eta^2)$, with $-1<\eta<1$.

For the $\kappa$-$\mu$ fading model we have
\begin{equation} \label{eq:49}
\phi(s)= \left(\frac{\mu(1+\kappa)} {\mu(1+\kappa)-s\overline{\gamma}}  \right)^\mu
\exp\left(\frac{\mu^2\kappa(1+\kappa)} {\mu(1+\kappa)-s\overline{\gamma}}-\mu\kappa  \right),
\end{equation}
Introducing (\ref{eq:48}) and/or (\ref{eq:49}) as factors in (\ref{eq:47}) we can consider many different and general interfering scenarios.

Note that, as we have not imposed any restriction on the interferers statistics, our outage probability expression also includes the case of correlated interferers, as long as the MGF of the total interference power is known. Fortunately, there exist closed-form expressions for the MGF of the addition of correlated signal powers of traditional fading models, such as Rayleigh, Nakagami-$m$ or Rician \cite{Annamalai2001}, and more general fading models such as $\eta$-$\mu$
\cite{Asghari2010}.


\section{Numerical results}
\label{secNum}
After describing how the Hoyt transform method can be applied to obtain new results for some information and communication-theoretic performance metrics, now we discuss the main implications that arise in practical scenarios.

\subsection{Channel capacity}
First, we study the Shannon capacity of ORA technique in Hoyt fading channels. In the following set of figures, we evaluate the capacity per bandwidth unit in Hoyt fading channels using (\ref{eqCora}).

\begin{figure}[lt]
\includegraphics[width=\columnwidth]{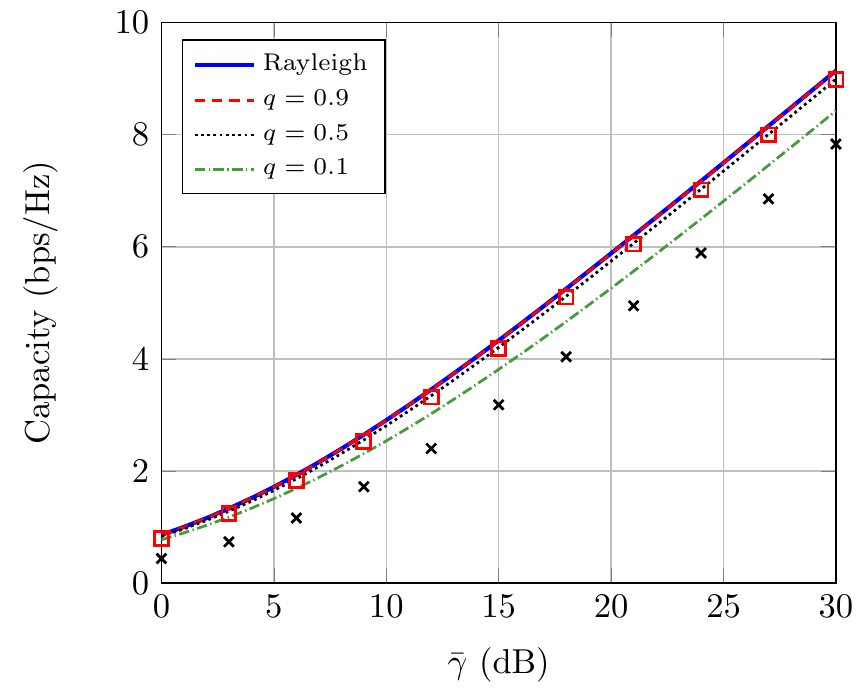}
\caption{Normalized capacity vs $\bar\gamma$ using ORA policy, for different values of $q$. Markers indicate the lower bounds on capacity given by (\ref{eq:26b}), for $q=0.5$ ('x') and $q=0.9$ (squares).}
\label{fig6}
\end{figure}

In Fig. \ref{fig6}, we observe how the capacity loss due to a more severe fading is low for values of $q>0.5$, being under $0.15$ bps/Hz in this range. In fact, it is noted how the achievable performance when $q=0.9$ is practically coincident with the Rayleigh case. Markers indicate the simple lower bounds obtained in (\ref{eq:26b}). We see how the lower bound becomes tighter as $q$ is increased, whereas the performance for the Rayleigh case serves as an upper bound.

The accuracy of the asymptotic approximations yielding from (\ref{asy01}) in the high-SNR regime is evaluated in Fig. \ref{fig1new}. The capacity of the AWGN and Rayleigh ($q=1$) cases are included as reference.
\begin{figure}[lt]
\includegraphics[width=\columnwidth]{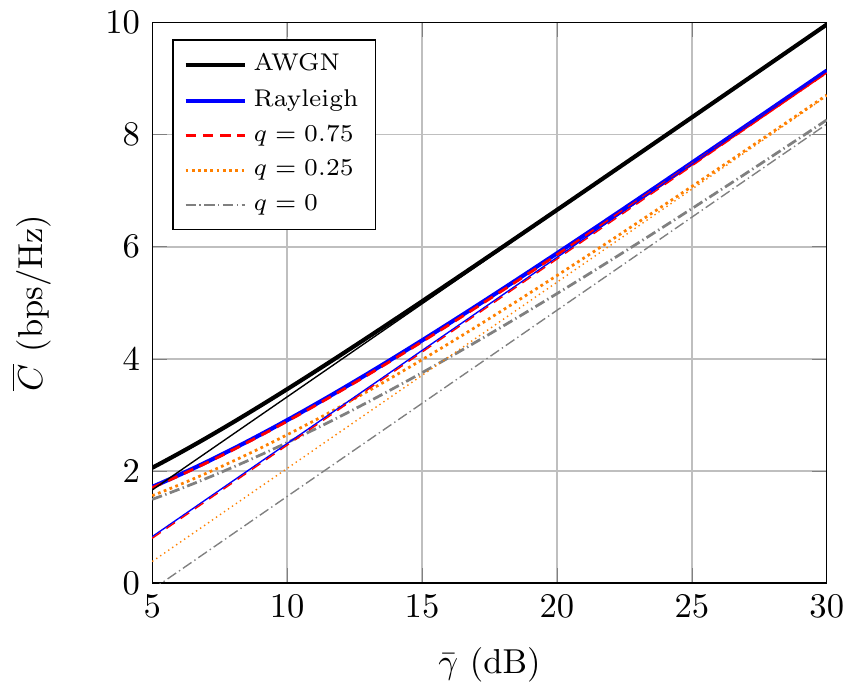}
\caption{Capacity per bandwidth unit vs $\overline\gamma$ using ORA policy, for different values of $q$. Straight thin lines represent the asymptotic results for high-SNR.}
\label{fig1new}
\end{figure}

We observe that for values of $q$ close to 1, the capacity loss with respect to the Rayleigh case is almost negligible. However, we see that as $q$ is reduced, the gap between the capacities in the Rayleigh and Hoyt cases grows. We see how the approximations for the capacity are asymptotically exact for sufficiently large $\bar\gamma$. However, we also observe that for $q=0$ the convergence between the exact and asymptotic capacity takes place at a larger value of $\bar\gamma$. We must note that the asymptotic results obtained for Hoyt fading are also valid for communication systems using optimal power and rate adaptation (OPRA) policy, since in the high-SNR regime this policy has the same performance as ORA \cite{Alouini1999}.

In Fig. \ref{fig2new}, we study the asymptotic capacity loss due to  Hoyt fading with respect to the reference AWGN case, which is given by (\ref{asy03}). The capacity loss for Rayleigh and Two-Ray \cite{Rao2015} are also included for comparison purposes.
\begin{figure}[lt]
\includegraphics[width=\columnwidth]{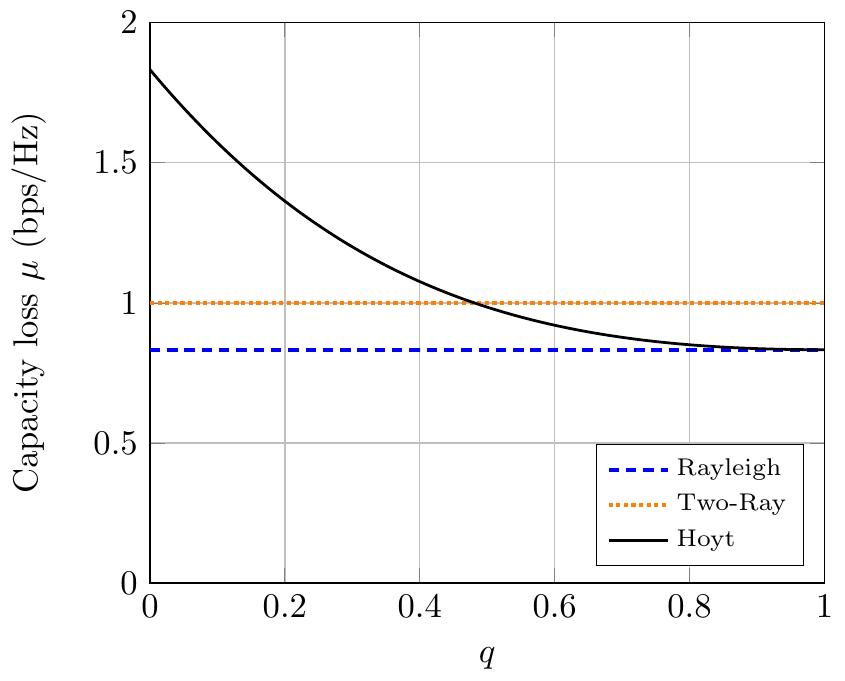}
\caption{Asymptotic capacity loss $\mu_{\text{Hoyt}}$ as a function of $q$.}
\label{fig2new}
\end{figure}
As indicated in (\ref{asy04}), the maximum capacity loss is obtained when $q=0$. This corresponds to the more severe fading condition that Hoyt fading can represent. As $q\rightarrow 1$, we see that the capacity loss tends to the value obtained in \cite{Alouini1999}, i.e. approximately $0.83$ bps/Hz. The comparison with the Two-Ray fading model is also interesting: we see that the capacity loss in Hoyt and Two-Ray fading is coincident for $q\approx0.48$. This suggests that Hoyt fading represents a more severe condition than the Two-Ray model if $q<0.48$, when the asymptotic capacity loss is chosen as the performance metric of interest.

\subsection{Secrecy capacity}

Now we focus on the scenario considered in Section \ref{secSecr}; specifically, we will evaluate the effect of considering that the links between Alice and Bob (and equivalently between Alice and Eve) can suffer from different fading severities, quantified by the parameters $q_b$ and $q_e$. In Fig. \ref{fig1}, the secrecy capacity OP derived in (\ref{eq:sec08}) is represented as a function of the average SNR at Bob $\bar\gamma_b$, for different sets of values of the Hoyt shape parameters. We assume that the normalized rate threshold value used to declare an outage is $R_S=0.1$, and an average SNR at Eve $\bar\gamma_e=15$ dB.

\begin{figure}[lt]
\includegraphics[width=\columnwidth]{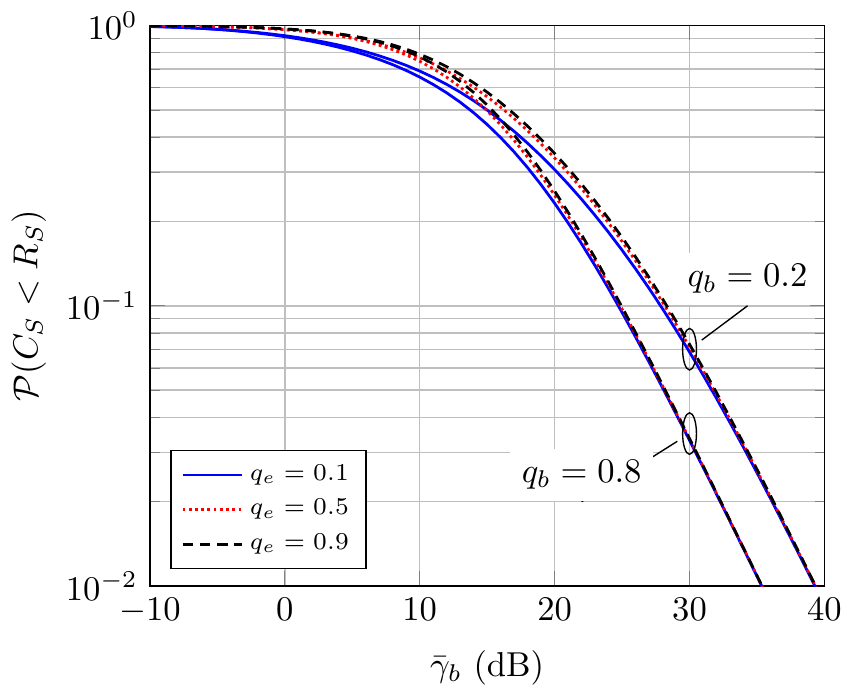}
\caption{Outage probability of secrecy capacity as a function of $\bar\gamma_b$, for different values of $q_e$ and $q_b$. Parameter values $\bar\gamma_e=15$ dB and $R_S=0.1$.}
\label{fig1}
\end{figure}

For a given value of $q_b$, we observe two different effects depending on the magnitude of $\bar\gamma_b$: in the low-medium SNR region, we see how a lower value of $q_e$ (i.e. a more severe fading in the eavesdropper link) makes the occurrence of a secrecy outage to be less likely. Hence, $\P(C_S<R_S)$ decreases with $q_e$ for a given $\bar\gamma_b$; we also note how the secrecy in this region is barely affected by the value of $q_b$.

Conversely, in the large SNR region we observe how the outage secrecy probability is mainly dominated by the fading severity of the desired link $q_b$. In this region, it is the distribution of $\gamma_b$ the dominant factor in the secure communication between Alice and Bob.

The probability of strictly positive secrecy capacity given in (\ref{eq:sec09}) is evaluated in Fig. \ref{fig2}, for the same set of parameter values considered in the previous figure.

\begin{figure}[lt]
\includegraphics[width=\columnwidth]{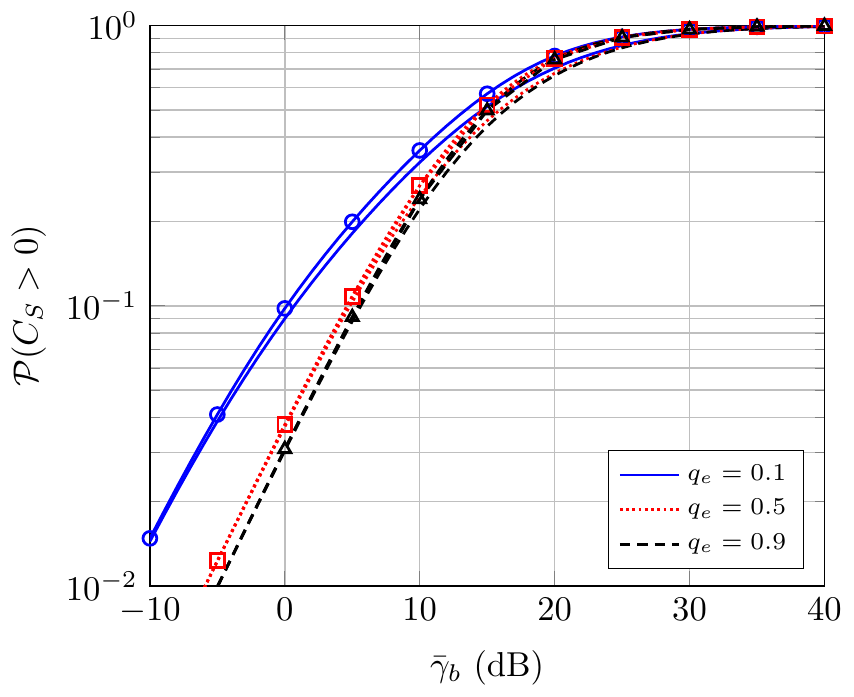}
\caption{Probability of strictly positive secrecy capacity as a function of $\bar\gamma_b$, for different values of $q_e$ and $q_b$. Parameter value $\bar\gamma_e=15$ dB. Solid lines only indicate $q_b=0.2$; solid lines with markers are included for $q_b=0.8$.}
\label{fig2}
\end{figure}

We can extract similar conclusions with regard of the effects of the fading severity in the desired and eavesdropper links. For low values of $\bar\gamma_b$, the secure communication is mainly determined by the distribution of $\gamma_e$; specifically, considering $\bar\gamma_b=5$ dB we see how $\P(C_S>0)$ is twice larger for $q_e=0.1$, compared to $q_e=0.5$. We also observe how a less severe fading in the desired link (i.e. a larger value of $q_b$) leads this probability to be larger.

\subsection{Outage Probability with interference}

Now, we consider the scenario analyzed in Section \ref{secOP}, where the OP of wireless links in Hoyt fading is investigated in the presence of arbitrarily distributed co-channel interference and background noise. We consider that the interference can be distributed according to the general $\eta$-$\mu$ and $\kappa$-$\mu$ distributions \cite{Yacoub2007}, widely employed in the literature for modeling NLOS and LOS propagation, respectively. For the sake of simplicity in the discussion, we consider a single interferer; however, note that the analysis introduced in Section \ref{secOP} can accommodate an arbitrary number of interferers, admitting correlation between them as well as combinations of $\eta$-$\mu$ and $\kappa$-$\mu$ distributions.

We will start considering that both co-channel interference and background noise are present, the desired link undergoes Hoyt fading, and the interference is distributed according to the $\eta$-$\mu$ or $\kappa$-$\mu$ distribution with arbitrary values of their parameters. We use format 2 for the $\eta$-$\mu$ distribution. We define the average SINR in this scenario as SINR$=\frac{\bar\gamma_d}{1+\bar\gamma_i}$, where $\bar\gamma_d$ is the ratio between the desired signal average receive power and the noise power (i.e., the SNR), whereas $\bar\gamma_i$ accounts for the ratio between the interferer average receive signal power at the receiver and the noise power (i.e., an interference-to-noise ratio, INR). 

Fig. \ref{fig3} evaluates the OP for different fading conditions for the desired and interfere links. We assume INR$=5$ dB, and a SINR threshold value used to declare an outage $\gamma_{o}=0$ dB. In the low SINR regime, it is possible to note the effect of the distribution of the interference in the OP: as $\mu$ is reduced, the propagation of the interfering signal experiences a more severe fade in both types of fading and hence the OP decays. Conversely, we observe that for large values of the average SINR the OP tends asymptotically to a value that is determined by the distribution of the desired link, i.e. by parameter $q$. For a given value of SINR, the OP grows as $q$ is reduced (i.e., an outage is declared with more probability when the direct link experiences a more severe fade).

\begin{figure}[lt]
\includegraphics[width=\columnwidth]{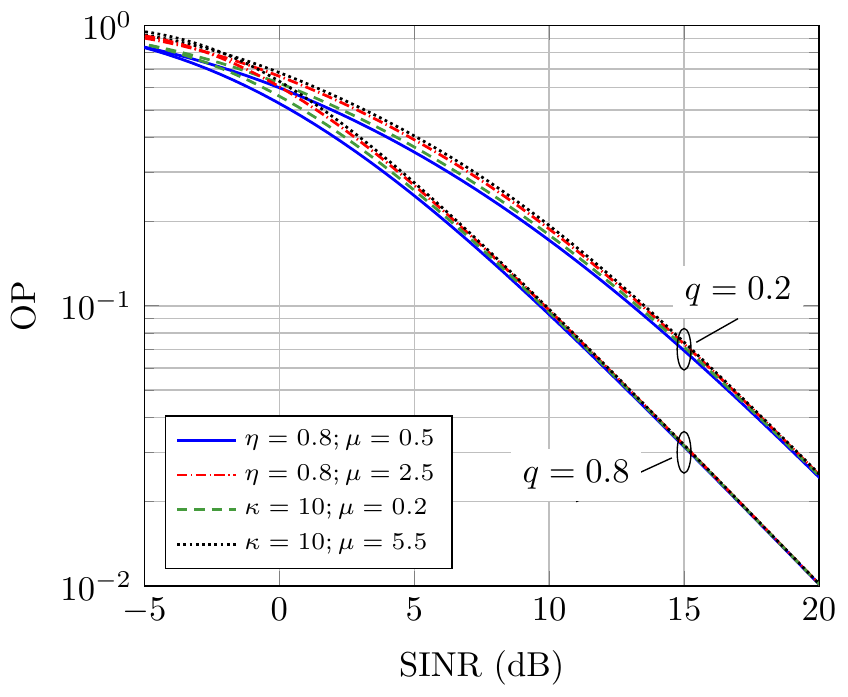}
\caption{Outage probability with co-channel interference and background noise as a function of the average SINR, considering a Hoyt distributed desired link and interference distributed according to $\eta$-$\mu$ or $\kappa$-$\mu$ distributions. Parameter values are INR$=5$ dB and $\gamma_{o}=0$ dB.}
\label{fig3}
\end{figure}

Now, we assume an interference-limited system, meaning that the background noise will be neglected. In this scenario, we define the average SIR as the ratio between the average powers of the desired and interfering signal. As in the previous case, we set the threshold value used to declare an outage as $\gamma_{o}=0$ dB. In Fig. \ref{fig4}, the OP is represented as a function of the SIR, for different values of parameters $q$, $\kappa$, $\eta$ and $\mu$. In the large SIR regime, it is reinforced the conclusion that the asymptotic OP is dominated by the distribution of the desired link. For low values of SIR, a similar behavior as in the previous figure is observed. However, it is noted a larger difference in the OP values when $\mu$ changes; this is due to the fact that when background noise vanishes, the differences between the different distributions of the interference become more evident.

\begin{figure}[lt]
\includegraphics[width=\columnwidth]{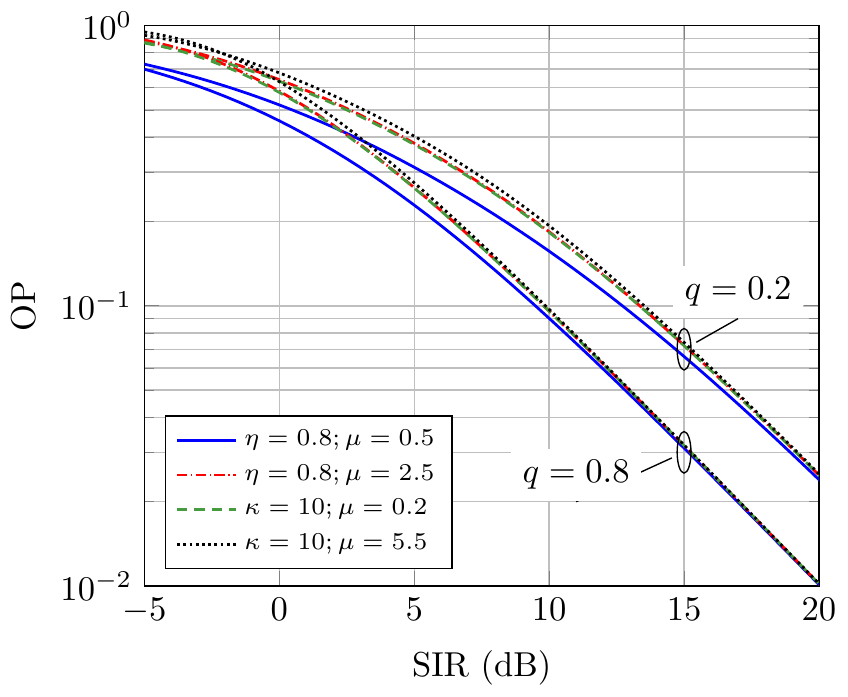}
\caption{Outage probability with co-channel interference as a function of the average SIR, considering a Hoyt distributed desired link and interference distributed according to $\eta$-$\mu$ or $\kappa$-$\mu$ distributions. Parameter value $\gamma_{o}=0$ dB.}
\label{fig4}
\end{figure}


\section{Conclusions}
\label{secConc}
We have provided a new look at the analysis of wireless communication systems in Hoyt (Nakagami-$q$) fading. Unlike previous approaches in the literature, we have found a connection between the Rayleigh and Hoyt distributions that facilitates the analysis in the latter scenario. 

By deriving integral expressions for the pdf and cdf of the squared Hoyt distribution, we have shown that the squared Hoyt distribution is in fact a composition of exponential distributions with continuously varying averages. Using this connection, we have introduced the Hoyt transform approach as a way to obtain easy-to-compute finite-range integral expressions of different performance metrics in Hoyt fading channels, as well as simple upper and lower bounds which become asymptotically tight as $q\rightarrow 1$, by simply leveraging existing results for Rayleigh fading channels. 

As a direct application, we have derived new expressions for several scenarios of interest in information and communication theory: (a) capacity analysis of adaptive transmission policies in Hoyt fading channels, (b) wireless information-theoretic security in Hoyt fading, and (c) outage probability analysis of Hoyt fading channels with \textit{arbitrarily distributed} interference and background noise.


A further implication of the results in this paper is that there is no need to reproduce complicated calculations for Hoyt fading in the cases where tractable expressions are available for the Rayleigh case. Instead, these analyses can be easily extended to Hoyt scenarios by using a straightforward finite-range integral.

\section*{Acknowledgments}
The work of Juan M. Romero-Jerez was supported by the Spanish Government-FEDER public Project No. TEC2013-42711-R. The work of F.J. Lopez-Martinez was funded by Junta de Andalucia (P11-TIC-7109), Spanish Government-FEDER (TEC2013-44442-P, COFUND2013-40259), the University of Malaga and the European Union under Marie-Curie COFUND U-mobility program (ref. 246550).

\bibliographystyle{IEEEtran}
\bibliography{TIT}





\end{document}